\newtheorem{theorem}{Theorem}[section]
\newtheorem{lemma}[theorem]{Lemma}
\newtheorem{proposition}[theorem]{Proposition}
\newtheorem{corollary}[theorem]{Corollary}
\newtheorem{remark}[theorem]{Remark}
\numberwithin{equation}{section}
\newcommand{\bbE}{\mathbb{E}}
\newcommand{\bbR}{\mathbb{R}}
\newcommand{\bbK}{\mathbb{K}}
\newcommand{\cG}{\mathcal{G}}
\newcommand{\cC}{\mathcal{C}}
\newcommand{\cO}{\mathcal{O}}
\newcommand{\cL}{\mathcal{L}}
\newcounter{hypA}
\newenvironment{hypA}{\refstepcounter{hypA}\begin{itemize}
  \item[({\bf A\arabic{hypA}})]}{\end{itemize}}
\title{Multilevel Sequential Monte Carlo  
with Dimension-Independent Likelihood-Informed Proposals}
\author{
 A. Beskos, \thanks{
 Department of Statistical Science,  University College London, London, UK
  }
\and
A. Jasra, \thanks{
Department of Statistics \& Applied Probability, National University of Singapore, SG, 117546
  }
 \newline
  \and
K. J. H. Law, \thanks{
  Computer Science and Mathematics Division, Oak Ridge National Laboratory,
  Oak Ridge, TN, USA, 37831
  }
  \and
Y. M. Marzouk, \thanks{
  Department of Aeronautics \& Astronautics,
  Massachusetts Institute of Technology,
  Cambridge, MA, USA, 02139
  }
\and
Y. Zhou \thanks{
  Department of Statistics \& Applied Probability, National University of Singapore, SG, 117546
  }
}
\begin{document}
\maketitle

\begin{abstract}

In this article we develop a new sequential Monte Carlo (SMC) method
for multilevel (ML) Monte Carlo estimation.
In particular, the method can be used
to estimate expectations with respect to a target
probability distribution over an infinite-dimensional
and non-compact space---as given, for example, by a Bayesian inverse problem with Gaussian
random field prior.  Under suitable assumptions the MLSMC method has
the optimal $\cO(\varepsilon^{-2})$
bound on the cost to obtain a mean-square error of $\cO(\varepsilon^2)$.
The algorithm is accelerated by dimension-independent likelihood-informed (DILI)
proposals \cite{cui2014dimension} designed for Gaussian priors, leveraging a novel variation which uses
empirical sample covariance information in lieu of Hessian information,
hence eliminating the requirement for gradient evaluations.
The efficiency of the algorithm is illustrated on two examples: inversion of
noisy pressure measurements in a PDE model of Darcy flow 
to recover the posterior distribution of the permeability field,
and inversion of noisy measurements of the solution
of an SDE to recover the posterior path measure.

     \noindent \textbf{Key words}: multilevel Monte Carlo, sequential Monte Carlo,
Bayesian inverse problem \\
   \noindent \textbf{AMS subject classification}: 82C80, 60K35.
\end{abstract}

\section{Introduction}
\label{sec:intro}

The estimation of expectations with respect to a probability
distribution over an infinite-dimensional and non-compact space, and
of the normalizing constants of such distributions, has a wide range
of applications; see for instance \cite{stuart2010inverse} and the
references therein.  In particular, Bayesian inverse problems (BIP)
with Gaussian random field priors are an important class of such
mathematical models.  In most cases of practical interest, one must
compute these estimates using the Monte Carlo method under a
finite-dimensional discretization of the associated probability
distribution; see \cite{cotter, hoang}, for example.

In many scenarios, such as the BIP above, the finite-dimensional
approximation of the probability distribution of interest becomes more
accurate but more computationally expensive as the dimension of the
approximation goes to infinity. This is precisely the class of
problems which are of interest in this paper. It is well known that
the multilevel Monte Carlo (MLMC) method \cite{gile:08,hein:98} can
reduce the computational effort, relative to independent sampling,
required to obtain a particular mean square error; see \cite{hoang,
  ourmlsmc} for examples in the inverse problem
context. 
The MLMC idea introduces a sequence of 
increasingly accurate
  approximations of the target probability distribution, 
and relies on
independently sampling from a collection of couples of this sequence
and employing the multilevel (ML) identity; details are given later in
this paper. The main challenge in problems of interest here is
that such independent sampling is seldom possible.

This paper employs sequential Monte Carlo (SMC) samplers, as these
approaches have been shown to outperform Markov chain Monte Carlo
(MCMC) in many cases (e.g.,~\cite{jasra}) and to be robust in classes
of high-dimensional problems (see \cite{beskos,andrew,jasra17weak}).
In \cite{ourmlsmc} an SMC method for multilevel estimation was
introduced and analyzed for a class of BIPs. This method was developed
specifically for scenarios where ML estimation is expected to be quite beneficial,
but where independent sampling from the couplings of interest is
not trivial to perform. This method was extended to the estimation of normalizing constants in \cite{mlsmcnc}.
Both \cite{ourmlsmc,mlsmcnc} use SMC and importance sampling to replace independent sampling and
coupling in the multilevel context.
However, the approaches in \cite{ourmlsmc,mlsmcnc}
can only approximate a sequence of probability distributions on a
\textit{fixed} state space. In other words, the dimension of the
parameter of interest, and hence the state space of the resulting
Markov chain, is assumed to be fixed and finite. Levels in the
estimation scheme correspond to refinements of the forward model
PDE approximation.
This paper, in contrast, assumes that the parameter of interest is in
principle infinite-dimensional; thus the resolution of the parameter
is refined along with the approximation of the PDE model as the level
increases. The dimension of the state space of the resulting
Markov chain therefore increases at each level, and hence a modification of previous
multilevel algorithms in \cite{ourmlsmc,mlsmcnc} is required.
%

The contributions of this paper are as follows:
\begin{enumerate}
\item{The design of a new SMC sampler approach for
    MLMC estimation which includes refinement of the parameter space
    as well as of the forward model.} 
\item{Theoretical cost analysis for this general MLSMC method.} 
\item{Introduction of a covariance-based version of the likelihood-informed subspace of
\cite{cui2014likelihood, cui2014dimension} (abbreviated cLIS), and a method for its sample approximation.}
\item{Adoption of efficient dimension-independent likelihood-informed (DILI)
proposals \cite{cui2014dimension} within the SMC algorithm, utilizing the new cLIS.} 
\end{enumerate}
Note that SMC samplers that are robust in robust in high-dimensional
settings (see, e.g., \cite{beskos,andrew,jasra17weak}) rely on MCMC as
well as on sequential importance sampling/resampling. For such samplers
to work well, the MCMC step must be efficient, i.e.,~mix over the
high-dimensional state space at a reasonable rate. 
In the present context, we show that this mixing can be achieved through efficient DILI
proposals from \cite{cui2014dimension}.

This article is structured as follows. In Section \ref{sec:setup}, the
basic algorithm and estimation procedure are introduced. Section
\ref{sec:theory} presents some theoretical results for the
algorithm. Section \ref{sec:mlsmcdili} shows how the DILI proposal
methodology can be used in the context of MLSMC. Section
\ref{sec:numerics} presents several numerical implementations of our
methodology. Some technical mathematical results are deferred to the
appendix.

\section{Model and Approach} 
\label{sec:setup}

\subsection{Model}
\label{ssec:model}

Let $U_0,U_1, \dots$ be a sequence of spaces with $U_n\subseteq\mathbb{R}^{d'_n}$,
$d'_n\in\mathbb{N}$ and $n\geq 0$.
Let $E_n = \bigotimes_{i=0}^n U_i \subseteq \bbR^{d_n}$, where
$d_n=\sum_{i=0}^{n} d'_i$. 
We consider a sequence of probability measures $\{\hat{\eta}_n\}_{n\geq 0}$ on spaces $\{E_n\}_{n\geq 0}$.
We denote the densities w.r.t.~an appropriate dominating measure as $\{\hat{\eta}_n\}_{n\geq 0}$.
We suppose that
$$
\hat{\eta}_n(u_{0:n}) = \frac{\kappa_n(u_{0:n})}{Z_n}
$$
with $\kappa:E_n\rightarrow\mathbb{R}^+$ known but $Z_n$ possibly unknown.
In practice, these probability measures
are associated with a Bayesian inverse problem and in particular a (basis function-type)
approximate solution of a partial differential equation.
As $n$ grows, so does the dimension of the target, but to a well defined infinite-dimensional limit.
Let the approximate solution of the continuous system 
associated to an input $u_{0:\ell}\in E_\ell$ processed into
a finite number $p\in\mathbb{N}$ of summary values be denoted $\rho_\ell$,
i.e.,~$\rho_\ell:E_\ell\rightarrow\mathbb{R}^{p}$.
We are interested in computing, for bounded-measurable functions
$\varphi:\mathbb{R}^p\rightarrow\mathbb{R}$
$$
\hat{\eta}_L(\varphi\circ\rho_L) := \int_{E_L} \varphi(\rho_L(u_{0:L})) \hat{\eta}_L(u_{0:L})du_{0:L}
$$
for some large $L$ or preferably $\hat{\eta}_{\infty}(\varphi\circ\rho_\infty)$.
Denote the infinite resolution target by
$\eta(\varphi) := \hat{\eta}_{\infty}(\varphi\circ\rho_\infty)$.
In addition it is of interest to estimate $Z_L$ or $Z_{\infty}$.
Define $\rho_l(u_{0:n}) := \rho_l(u_{0:l})$ for $n>l$.

Assume that
\begin{equation}
\label{eq:seq_kappa}
\kappa_\ell(du_{0:\ell}) = \cL(\cG_\ell(u_{0:\ell})) \mu_0(du_{0:\ell}),
\end{equation}
where $\cL$ is a likelihood term, $\cG_\ell:E_\ell \rightarrow \bbR^q$
is the map from parameter input $u_{0:\ell}\in E_\ell$
to $q\in\mathbb{N}$ observations of the
approximate solution of the continuous system, 
and $\mu_0$ is the 
prior density,
where the limiting prior measure $\mu_0$ is defined on $E_\infty$,
and the density of its finite-dimensional distribution is taken for
$u_{0:\ell} \in E_\ell$.
It is worth noting that the theory, to be described later on, will be more broadly
applicable than the context described in this paragrapgh.


\subsection{Algorithm and Estimator}

We consider a sequence of Markov kernels $\{K_n\}_{n\geq 0}$
$K_n:E_n\rightarrow\mathcal{P}(E_n)$ ($\mathcal{P}(E_n)$ are the probability measures on $E_n$)
which each keep the respective $\{\hat{\eta}_n\}_{n\geq 0}$ invariant, i.e.,~$\hat{\eta}_n K_n = \hat{\eta}_n$.
Let $\{q_n\}_{n\geq 1}$ be a sequence of probability kernels on $\{U_n\}_{n\geq 1}$, $q_n:E_{n-1}\rightarrow\mathcal{P}(U_n)$.
Let $\{M_n\}_{n\geq 1}$, $M_n:E_{n-1}\rightarrow\mathcal{P}(E_n)$ be defined as
\begin{equation*}
M_{n}(u_{0:n-1},du_{0:n}') = K_{n-1}(u_{0:n-1},du_{0:n-1}')\otimes q_n(u_{0:n-1}',du'_n).
\end{equation*}
Finally, let
\begin{equation*}
G_0(u_0) = 1
\end{equation*}
and for $n\geq 1$
\begin{equation*}
G_n(u_{0:n}) = \frac{\kappa_n(u_{0:n})}{\kappa_{n-1}(u_{0:n-1})q_n(u_{0:n-1},u_n)},
\end{equation*}
where slightly degenerate notation has been used for $q_n(u_{0:n-1},du_n)=q_n(u_{0:n-1},u_n)du_n$.
For $n\geq 0$, $\varphi\in\mathcal{B}_b(E_n)$ set
\begin{equation*}
\gamma_n(\varphi) := \int_{E_0\times\cdots\times E_n }\varphi(u_{0:n}(n))\prod_{p=0}^{n-1} G_p(u_{0:p}(p))
\hat{\eta}_0(du_0(0)) \prod_{p=1}^{n} M_p(u_{0:p-1}(p-1),du_{0:p}(p))
\, ;
\end{equation*}
then one can show that, for $n\geq 1$, $\hat{\eta}_n(\varphi) = \gamma_n(G_n\varphi)/\gamma_n(G_n)$. Denote
$\eta_n(\varphi) = \gamma_n(\varphi)/\gamma_n(1)$. We note that $Z_n/Z_0 = \gamma_n(G_n)$. We set $\eta_0 = \hat{\eta}_0$.

Let $n\geq 1$, $\mu\in\mathcal{P}(E_{n-1})$ and define $\Phi_n:\mathcal{P}(E_{n-1})\rightarrow \mathcal{P}(E_{n})$
$$
\Phi_n(\mu)(du_{0:n}) = \frac{\mu(G_{n-1}M_n(\cdot,du_{0:n}))}{\mu(G_{n-1})}.
$$
Our multilevel algorithm works as follows. Let $N_0\geq N_1\geq \cdots\geq N_L\geq 1$ be a sequence of integers that are given.
The algorithm approximates the sequence $\{\eta_n\}_{L\geq n\geq 0}$. At time zero, one samples
$$
\prod_{i=1}^{N_0} \eta_0(du_0^i(0)) \, .
$$
Let $\eta_0^{N_0}$ denote the $N_0-$empirical measure of samples.
At time 1, one samples from
$$
\prod_{i=1}^{N_1} \Phi_1(\eta_0^{N_0})(du_{0:1}^i(1)) \, .
$$
Thus, in an obvious extension of the notation, the joint law of the algorithm is
$$
\Big(\prod_{i=1}^{N_0} \eta_0(du_0^i(0))\Big)\Big(\prod_{\ell=1}^L \prod_{i=1}^{N_l} \Phi_\ell(\eta_{\ell-1}^{N_{\ell-1}})(du_{0:\ell}^i(\ell))\Big).
$$

Notice that the present algorithm is different from the one in
\cite{delm:06b} and hence the one in 
\cite{ourmlsmc}. 
 In particular, the state space dimension grows at each iteration. 
%
Also the algorithm will have increasing cost with time (the subscript
$n$). This is because the cost of the MCMC steps and sometimes the
cost of computing the $G_n$ will grow at some rate with the size of
the state space. This is generally not desirable for classical
applications of SMC methods associated with the filtering of
non-Gaussian and nonlinear state-space models (i.e.,~dynamic problems
with data arriving sequentially in time).  However, the algorithm
described above is designed for inverse problems that are so-called
``static,'' i.e.,~one has a single instance of the data from which to
make inference. Such growth is therefore less of a concern.

\subsubsection{Estimation}\label{ssec:est_smc}

We note that
$$
\hat{\eta}_L(\varphi\circ\rho_L) = \sum_{\ell=0}^L [\hat{\eta}_\ell(\varphi\circ\rho_\ell) - \hat{\eta}_{\ell-1}(\varphi\circ\rho_{\ell-1})]
$$
with $\hat{\eta}_{-1}(\varphi\circ\rho_{-1}):=0$.
Now
$$
\hat{\eta}_\ell(\varphi\circ\rho_\ell)  = \frac{Z_{\ell-1}}{Z_\ell}\hat{\eta}_{\ell-1}M_\ell(G_\ell\varphi\circ\rho_\ell) = \frac{Z_{\ell-1}}{Z_\ell}\left(\hat{\eta}_{\ell-1}\otimes q_\ell\right )(G_\ell\varphi\circ\rho_\ell).
$$
Also $\eta_\ell (\varphi \circ \rho_{\ell-1}) = \hat \eta_{\ell-1} (\varphi \circ \rho_{\ell-1})$.
So one can approximate, for $\ell\geq 1$, $\hat{\eta}_\ell(\varphi\circ\rho_\ell) - \hat{\eta}_{\ell-1}(\varphi\circ\rho_{\ell-1})$ by
$$
\eta_{\ell}^{N_\ell}(G_\ell)^{-1}\eta_{\ell}^{N_\ell}(G_\ell\varphi\circ\rho_\ell) - \eta_{\ell}^{N_\ell}(\varphi\circ\rho_{\ell-1}),
$$
and $\hat{\eta}_0(\varphi\circ\rho_0)$ by $\eta^{N_0}_0(\varphi\circ\rho_0)$. This estimate is \emph{different} than that in \cite{ourmlsmc},
but similar in spirit. As $Z_\ell/Z_0 = \gamma_\ell(G_\ell)$, this can be approximated by
$$
\gamma_{\ell}^{N_{0:\ell}}(G_{\ell}) = \prod_{l=0}^\ell \eta_{l}^{N_l}(G_l).
$$
As shown in \cite{mlsmcnc} (in a different context) this estimator has similar properties to one that follows the `standard' ML type principle.

\subsection{No Discretization Bias}

We can also consider removing the bias of the estimators in Section \ref{ssec:est_smc} using ideas from \cite{rg:15}; indeed this is achieved in \cite{vollmer}.
In particular, let $L\in\{0,1,2,\dots\}$ be a random variable that is independent of the MLSMC
algorithm with $\mathbb{P}_L(L\geq l)>0~\forall l\geq 0$. 
Then 
the following is an unbiased estimator of $\gamma_{\infty}(G_{\infty})=Z_{\infty}/Z_0$
$$
\sum_{\ell=0}^L
\frac{1}{\mathbb{P}_L(L\geq \ell)}
 \Big(\gamma_{\ell}^{N_{0:\ell}}(G_{\ell}) - \gamma_{\ell-1}^{N_{0:\ell-1}}(G_{\ell-1})\Big).
$$ 
The main barrier to show that this estimator is unbiased is to show that
\begin{equation}
\lim_{n\rightarrow\infty}\mathbb{E}[(\gamma_{n}^{N_{0:n}}(G_{n})-\gamma_{\infty}(G_{\infty}))^2] = 0\label{eq:ub1}.
\end{equation}
Following from the unbiased property of each $\gamma_{n}^{N_{0:n}}(G_{n})$ 
(\cite{delm:04}), we have
$$
\mathbb{E}[(\gamma_{n}^{N_{0:n}}(G_{n})-\gamma_{\infty}(G_{\infty}))^2] =
\mathbb{E}[\gamma_{n}^{N_{0:n}}(G_{n})^2] +
\gamma_{\infty}(G_{\infty})^2 - 2\gamma_{\infty}(G_{\infty})\gamma_{n}(G_{n}).
$$
Now if one ensures $N_n \geq n$, then one expects (via \cite[Theorem 1.1]{berard}) in the inverse problem context that
$$
\frac{\gamma_{n}^{N_{0:n}}(G_{n})^2}{\gamma_{\infty}(G_{\infty})^2} \rightarrow_{\mathbb{P}} 1.
$$
Therefore, under appropriate uniform integrability conditions,
$$
\lim_{n\rightarrow\infty} \mathbb{E}[\gamma_{n}^{N_{0:n}}(G_{n})^2] =
\gamma_{\infty}(G_{\infty})^2 
$$
and thus \eqref{eq:ub1} holds.
A similar remark can be used when estimating $\hat{\eta}_{\infty}$ except that the required asymptotics in $n$ do not appear to
be in the literature. 
Note that there is a large random cost for this method and 
hence we do not consider such an
approach further.

\section{Theoretical results}\label{sec:theory}

The following assumptions will be made. Throughout $E_\ell$ is compact.

\begin{hypA}
\label{hyp:A}
Assume there exist some $\underline{c}$, $\overline{C}$
such that for all $\ell=0,1,\dots, $ and all $u_{0:\ell} \in E_\ell$
\begin{equation}\label{eq:gbound}
0 < \underline{c} \leq G_\ell(u_{0:\ell}) \leq \overline{C} < \infty.
\end{equation}
\end{hypA}

\begin{hypA}
\label{hyp:B}
Assume there exists a $\lambda < 1$ such that for all $\ell=0,1,\dots$, 
$u,v \in E_\ell$ and $A\subset E_\ell$
$$
K_\ell(u,A) \geq \lambda K_\ell(v,A).
$$
\end{hypA}


\begin{hypA}
\label{hyp:C}
Assume there exists a $c>0$ and $\beta>0$ such that for all $\ell$
sufficiently large
\begin{equation}\label{eq:gee}
V_\ell := \max \{\|G_\ell - 1\|^2_\infty, \|\rho_\ell-\rho_{\ell-1}\|^2_\infty\} \leq c h_\ell^{\beta},
\end{equation}
where the sup-norm is with respect to the probability space.
Also, assume the cost $\cC_\ell$
to evaluate $G_\ell$ and $\rho_\ell$
satisfies, for some $\zeta\ge 0$,
$$\cC_\ell \leq c \, h_\ell^{-\zeta}.$$
\end{hypA}

Define
\[
\hat \eta^{\rm ML}_L(\varphi) := \eta_{0}^{N_0}(\varphi\circ\rho_{0}) + \sum_{\ell=0}^L  \eta_{\ell}^{N_\ell}(G_\ell)^{-1}\eta_{\ell}^{N_\ell}(G_\ell\varphi\circ\rho_\ell) - \eta_{\ell}^{N_\ell}(\varphi\circ\rho_{\ell-1}).
\]
Let $a(\epsilon) \lesssim b(\epsilon)$ denote that there exists a $c>0$
such that $a(\epsilon) \lesssim c b(\epsilon)$ for all $\epsilon$ sufficiently small.

\begin{proposition}
Assume (A\ref{hyp:A}-\ref{hyp:C}) and assume $\beta>\zeta$.
Then, for any $\varepsilon>0$
there exists an $L$, and a choice of $\{N_\ell\}_{\ell=0}^L$, such that
\begin{equation}
\bbE | \hat \eta^{\rm ML}_L(\varphi) - \hat \eta(\varphi) |^2 \lesssim \varepsilon^2,
\end{equation}
for a total cost ${\rm Cost} \lesssim \varepsilon^{-2}$.
\end{proposition}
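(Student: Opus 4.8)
Here is how I would attack this. I will write $\hat\eta(\varphi)=\hat\eta_\infty(\varphi\circ\rho_\infty)$ for the limiting target. The plan is the classical multilevel bias/variance program, adapted to the two features specific to this algorithm: the particle systems at consecutive levels are coupled (the level-$\ell$ draws come from $\Phi_\ell(\eta_{\ell-1}^{N_{\ell-1}})$) and the state space grows with $\ell$. I would start from the split
\[
\bbE|\hat\eta^{\rm ML}_L(\varphi) - \hat\eta(\varphi)|^2 \;\lesssim\; \big|\hat\eta_L(\varphi\circ\rho_L) - \hat\eta(\varphi)\big|^2 \;+\; \bbE\big|\hat\eta^{\rm ML}_L(\varphi) - \hat\eta_L(\varphi\circ\rho_L)\big|^2 ,
\]
the first term being the discretisation bias and the second the sampling error relative to the \emph{exact} level-$L$ value (so no separate estimator-bias term is needed). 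For the bias I would telescope $\hat\eta(\varphi)-\hat\eta_L(\varphi\circ\rho_L)=\sum_{\ell>L}[\hat\eta_\ell(\varphi\circ\rho_\ell)-\hat\eta_{\ell-1}(\varphi\circ\rho_{\ell-1})]$ and bound each increment via the identities $\hat\eta_\ell(\varphi\circ\rho_\ell)=(\hat\eta_{\ell-1}\otimes q_\ell)(G_\ell\,\varphi\circ\rho_\ell)/(\hat\eta_{\ell-1}\otimes q_\ell)(G_\ell)$ and $\hat\eta_{\ell-1}(\varphi\circ\rho_{\ell-1})=(\hat\eta_{\ell-1}\otimes q_\ell)(\varphi\circ\rho_{\ell-1})$: a short manipulation makes the increment $\cO(\|G_\ell-1\|_\infty+\|\varphi\circ\rho_\ell-\varphi\circ\rho_{\ell-1}\|_\infty)$, the normalisers being bounded below by (A\ref{hyp:A}). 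By (A\ref{hyp:C}) (read as also controlling $\|\varphi\circ\rho_\ell-\varphi\circ\rho_{\ell-1}\|_\infty^2$, e.g.\ for Lipschitz $\varphi$) this is $\lesssim h_\ell^{\beta/2}$, and, the $h_\ell$ decaying geometrically, the tail sums to $|\hat\eta_L(\varphi\circ\rho_L)-\hat\eta(\varphi)|^2\lesssim h_L^\beta$.

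The real work is the sampling error, and I would proceed as in \cite{ourmlsmc}. Telescope the estimator, $\hat\eta^{\rm ML}_L(\varphi)-\hat\eta_L(\varphi\circ\rho_L)=\sum_{\ell=0}^L D_\ell$, where $D_\ell$ is the $\ell$-th increment estimator minus the true $\ell$-th increment (using $\eta_\ell(\varphi\circ\rho_{\ell-1})=\hat\eta_{\ell-1}(\varphi\circ\rho_{\ell-1})$ and $\hat\eta_\ell(\varphi\circ\rho_\ell)=\eta_\ell(G_\ell\,\varphi\circ\rho_\ell)/\eta_\ell(G_\ell)$). The key estimate is
\[
\bbE|D_\ell|^2 \;\lesssim\; \frac{V_\ell}{N_\ell}\quad(\ell\ge1),\qquad \bbE|D_0|^2\lesssim \frac1{N_0}.
\]
To get it I would linearise the ratio: since $\eta_\ell^{N_\ell}(G_\ell)\ge\underline c$ by (A\ref{hyp:A}), $D_\ell$ equals a bounded multiple of $\eta_\ell^{N_\ell}(f_\ell)-\eta_\ell(f_\ell)$ with $f_\ell:=\bar G_\ell(\varphi\circ\rho_\ell-\hat\eta_\ell(\varphi\circ\rho_\ell))-\varphi\circ\rho_{\ell-1}$, $\bar G_\ell:=G_\ell/\eta_\ell(G_\ell)$; the point is that $f_\ell-\eta_\ell(f_\ell)=(\bar G_\ell-1)(\varphi\circ\rho_\ell-\hat\eta_\ell(\varphi\circ\rho_\ell))+(\varphi\circ\rho_\ell-\varphi\circ\rho_{\ell-1})-(\hat\eta_\ell(\varphi\circ\rho_\ell)-\hat\eta_{\ell-1}(\varphi\circ\rho_{\ell-1}))$, so $\mathrm{osc}(f_\ell)^2\lesssim V_\ell$ by (A\ref{hyp:C}) and the bias estimate above. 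Then I would invoke the non-asymptotic $L_2$ theory for SMC samplers: decompose $\eta_\ell^{N_\ell}(f_\ell)-\eta_\ell(f_\ell)$ into the martingale-type fluctuations injected by the $N_k$ fresh draws at each level $k\le\ell$ and transported forward by the normalised Feynman--Kac semigroups $\Phi_{k+1},\dots,\Phi_\ell$; the minorisation (A\ref{hyp:B}) together with the boundedness (A\ref{hyp:A}) gives these semigroups uniform geometric stability on oscillations, so the level-$k$ contribution is $\cO(\alpha^{\ell-k}\,\mathrm{osc}(f_\ell)/\sqrt{N_k})$ for some $\alpha\in(0,1)$ depending on $\lambda$. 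Since $N_0\ge\cdots\ge N_L$, $\sum_{k\le\ell}\alpha^{2(\ell-k)}/N_k\le (1-\alpha^2)^{-1}/N_\ell$, and the claim follows ($\mathrm{osc}(f_0)$ being merely $\cO(1)$, which gives the $1/N_0$ bound for $\ell=0$).

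Combining the $D_\ell$ — by Minkowski, or by a slightly finer covariance bound that again exploits $\mathrm{osc}(f_\ell)\lesssim h_\ell^{\beta/2}$ to control cross terms — gives
\[
\bbE|\hat\eta^{\rm ML}_L(\varphi) - \hat\eta(\varphi)|^2 \;\lesssim\; h_L^\beta + \frac1{N_0} + \sum_{\ell=1}^{L}\frac{h_\ell^\beta}{N_\ell}.
\]
Finally I would optimise: choose $L$ with $h_L^\beta\asymp\varepsilon^2$ (so $L=\cO(\log\varepsilon^{-1})$) to kill the bias, and, since ${\rm Cost}\lesssim\sum_\ell N_\ell h_\ell^{-\zeta}$ by (A\ref{hyp:C}), minimise $\sum_\ell N_\ell h_\ell^{-\zeta}$ subject to $N_0^{-1}+\sum_{\ell\ge1}h_\ell^\beta/N_\ell\lesssim\varepsilon^2$; Lagrange multipliers give $N_\ell\asymp\varepsilon^{-2}h_\ell^{(\beta+\zeta)/2}\big(\sum_k h_k^{(\beta-\zeta)/2}\big)$ (rounded up to a positive integer), which is non-increasing in $\ell$, yields mean-square error $\lesssim\varepsilon^2$, and gives ${\rm Cost}\lesssim\varepsilon^{-2}\big(\sum_{\ell\ge0}h_\ell^{(\beta-\zeta)/2}\big)^2$. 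The hypothesis $\beta>\zeta$ is exactly what makes this last geometric series summable uniformly in $L$, so the bracket is $\cO(1)$ and ${\rm Cost}\lesssim\varepsilon^{-2}$. I expect the main obstacle to be the $L_2$ bound on $D_\ell$, specifically showing that errors generated at lower levels $k<\ell$ and carried into the $\ell$-th increment estimator are controlled at the small scale $h_\ell^{\beta/2}$ rather than merely $\cO(1)$; this is precisely where the uniform-in-$\ell$ stability of the particle filter, furnished by (A\ref{hyp:A})--(A\ref{hyp:B}), is indispensable, while the bias telescoping and the cost optimisation are then routine multilevel arguments.
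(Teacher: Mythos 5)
Your argument follows essentially the same route as the paper: the paper's Lemma~\ref{lem:main} is precisely your per-level $L_2$ bound $\bbE|D_\ell|^2\lesssim V_\ell/N_\ell$ (obtained there from (A\ref{hyp:A})--(A\ref{hyp:B}) via Del Moral's Feynman--Kac stability theory, with the cross terms $\sum_{\ell<q}V_\ell^{1/2}V_q^{1/2}\{\kappa^q/N_\ell+N_\ell^{-1/2}N_q^{-1}\}$ tracked explicitly and shown negligible via Theorem 3.3 of \cite{jasra2016forward} rather than absorbed by Minkowski), and your choices $L\eqsim|\log\varepsilon|$ and $N_\ell=\varepsilon^{-2}K_L h_\ell^{(\beta+\zeta)/2}$ with $K_L=\sum_\ell h_\ell^{(\beta-\zeta)/2}=\cO(1)$ under $\beta>\zeta$ are identical to the paper's. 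Your Minkowski shortcut for recombining the increments does suffice here, since $\sum_\ell (V_\ell/N_\ell)^{1/2}\lesssim \varepsilon K_L^{-1/2}\sum_\ell h_\ell^{(\beta-\zeta)/4}=\cO(\varepsilon)$ under the same hypothesis, so nothing essential is lost.
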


\begin{proof}
The proof follows essentially that of \cite{ourmlsmc} given the above assumptions.
Observe that Lemma \ref{lem:main} (in the appendix) below provides the bound
\begin{align*}
\mathbb{E}\big[\{\hat \eta^{\rm ML}_L(g)-\mathbb{E}_{\eta_L}[g(U)]\}^2\big]
\leq
C\,\bigg(\frac{1}{N_0} + &\sum_{\ell=1}^{L}\frac{V_\ell}{N_{\ell}} 
+ \sum_{1\le \ell<q\le L}V_\ell^{1/2} V_q^{1/2}
\big\{\tfrac{\kappa^{q}}{N_{\ell}}
+\tfrac{1}{N_{\ell}^{1/2}N_{q}}
\big\} \bigg)\ .
\end{align*}
Theorem 3.3 of \cite{jasra2016forward}
describes how to complete the proof.
Briefly, the choice $L \eqsim |\log \varepsilon|$ controls the bias.
One chooses $N_\ell=\varepsilon^{-2}K_L h_\ell^{(\beta+\zeta)/2}$,
where $K_L = \sum_{\ell=1}^{L-1} h_\ell^{(\beta-\zeta)/2}=\cO(1)$,
so one has
$$
{\rm COST} = \sum_{\ell=0}^L N_\ell C_\ell = \varepsilon^{-2} K_L^{2}
\lesssim \varepsilon^{-2}.
$$
It then suffices to show the second term is negligible for this choice,
and this is done in Theorem 3.3 of \cite{jasra2016forward}.
\end{proof}

The below result follows directly from that in \cite{mlsmcnc} and hence the proof is omitted.

\begin{corollary}
Assume (A\ref{hyp:A}-\ref{hyp:C}) and assume $\beta>\zeta$.
Then, for any $\varepsilon>0$
there exists an $L$ and a choice of $\{N_\ell\}_{\ell=0}^L$ such that
\begin{equation}
\bbE | \gamma_L^{N_{0:L}}(G_L) - Z_\infty/Z_0 |^2 \lesssim \varepsilon^2,
\end{equation}
for a total cost ${\rm Cost} \lesssim |\log\varepsilon| \varepsilon^{-2}$.
\end{corollary}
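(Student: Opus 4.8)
The plan is to mirror the structure of the preceding Proposition, replacing the error analysis for $\hat\eta^{\rm ML}_L(\varphi)$ by the corresponding analysis for the normalizing-constant estimator $\gamma_L^{N_{0:L}}(G_L) = \prod_{\ell=0}^L \eta_\ell^{N_\ell}(G_\ell)$. First I would invoke the central limit / mean-square error decomposition for the product-form SMC estimator of a ratio of normalizing constants, as established in \cite{mlsmcnc}: under (A\ref{hyp:A}) and (A\ref{hyp:B}) the relative variance of $\gamma_L^{N_{0:L}}(G_L)$ admits a bound of the same multilevel shape as the one quoted in the proof of the Proposition, namely a sum of terms $V_\ell/N_\ell$ together with cross-terms in $V_\ell^{1/2}V_q^{1/2}$ that are controlled geometrically. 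The bias $|\gamma_L(G_L) - \gamma_\infty(G_\infty)| = |Z_L/Z_0 - Z_\infty/Z_0|$ is handled exactly as for $\varphi$: by (A\ref{hyp:C}), $\|G_\ell - 1\|_\infty^2 \le c h_\ell^\beta$ forces $|Z_\ell/Z_{\ell-1} - 1|$ to decay at rate $h_\ell^{\beta/2}$, so choosing $L \eqsim |\log\varepsilon|$ makes the squared bias $\lesssim \varepsilon^2$.

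Next I would carry out the cost-allocation step. The only structural difference from the Proposition is that the normalizing-constant estimator is multiplicative across levels rather than additive, which is precisely the situation treated in \cite{mlsmcnc}; the upshot is that the optimal allocation picks up an extra $|\log\varepsilon|$ factor relative to the pure expectation case. Concretely, I would again set $N_\ell \propto \varepsilon^{-2} K_L h_\ell^{(\beta+\zeta)/2}$ with $K_L = \sum_{\ell} h_\ell^{(\beta-\zeta)/2}$, verify via (A\ref{hyp:C}) that $K_L = \cO(1)$ when $\beta > \zeta$, and compute ${\rm Cost} = \sum_\ell N_\ell \cC_\ell$. The additional logarithmic factor arises because, in the multiplicative estimator, the cross-level variance terms combine in a way that effectively requires the variance at each level to be a factor $L \eqsim |\log\varepsilon|$ smaller; tracking this through gives ${\rm Cost} \lesssim |\log\varepsilon|\,\varepsilon^{-2}$ rather than $\varepsilon^{-2}$.

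Since the statement says the result follows directly from \cite{mlsmcnc} with the proof omitted, the write-up is essentially a pointer: one checks that (A\ref{hyp:A}--\ref{hyp:C}) imply the hypotheses under which the main theorem of \cite{mlsmcnc} applies (boundedness of the potentials, the minorization condition on the kernels, and the rate $V_\ell \le c h_\ell^\beta$ with cost $\cC_\ell \le c h_\ell^{-\zeta}$), and then the conclusion is read off from that theorem's cost bound. The main obstacle, were one to write it in full, would be re-deriving the multilevel variance bound for the \emph{product} estimator $\prod_{\ell=0}^L \eta_\ell^{N_\ell}(G_\ell)$ — in particular controlling the covariance between the successive factors $\eta_\ell^{N_\ell}(G_\ell)$, which are not independent because they are computed along the same particle system — and confirming that this is exactly the computation already performed in \cite{mlsmcnc}, so that the extra $|\log\varepsilon|$ in the cost is genuinely necessary and not an artifact of a loose bound.
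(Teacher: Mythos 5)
Your proposal matches the paper's treatment: the paper offers no argument at all, stating only that the corollary ``follows directly from that in \cite{mlsmcnc}'' with the proof omitted, and your write-up is likewise a pointer to \cite{mlsmcnc} after checking that (A\ref{hyp:A})--(A\ref{hyp:C}) supply the boundedness, minorization, and rate hypotheses of that reference. The additional detail you sketch --- the bias control via $\|G_\ell-1\|_\infty \lesssim h_\ell^{\beta/2}$, the allocation $N_\ell \propto \varepsilon^{-2}K_L h_\ell^{(\beta+\zeta)/2}$, and your heuristic for the origin of the extra $|\log\varepsilon|$ in the cost of the product estimator --- is consistent with the neighbouring Proposition and with Section 2.2.1, but cannot be checked against the paper itself, which supplies nothing beyond the citation.
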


%

\section{Dimension-Independent Likelihood-Informed  Proposals}
\label{sec:mlsmcdili}

In this section, we describe how to effectively embed the DILI 
proposals from \cite{cui2014dimension}
into the MLSMC framework.  
In particular, Section~\ref{sec:inf} 
describes the non-intrusive (i.e., gradient-free) covariance-based construction
of the cLIS. 
Section \ref{sec:em} illustrates an approach for obtaining the dimension of the cLIS.
Section \ref{sec:dili} provides a description of the resulting DILI proposals
which will ultimately be used as the kernels $K_\ell$ 
within the MLSMC algorithm and, finally,
Section~\ref{sec:mlis}
places the cLIS construction within the multilevel context. Later, we will present simulation studies that illustrate the significance of such likelihood-informed proposal for the effectiveness of the overall MLSMC method.

\subsection{Sample Approximation to cLIS} 
\label{sec:inf}
For simplicity of exposition, this section will consider (high)
finite dimension $d$; however the framework is easily extended to infinite dimensional spaces.
Consider the case where we have a particle population  $u^i \in \bbR^d$, $1\le i\le N$,
for some $N\ge 1$, from the probability measure $\eta\in\mathcal{P}(\bbR^{d})$,
for some (high) dimension $d\ge 1$.
Define the covariance operator
\begin{equation*}
C := \bbE_\eta [ (u - \bbE_\eta(u)) \otimes  (u - \bbE_\eta(u)) ],
\end{equation*}
and assume that
\begin{equation}
\label{eq:idea}
C = I - QQ^\top,
\end{equation}
where $Q \in \bbR^{d \times m}$, $I$ is the $d\times d$ identity matrix,
and $m \ll d$ is the dimension of 
a linear subspace of \emph{concentration}  of the measure $\eta$,
as characterized by the covariance operator,
with respect to some reference measure
$\eta_0$ such that
\begin{equation*}
\bbE_{\eta_0} (u - \bbE_{\eta_0}(u)) \otimes  (u - \bbE_{\eta_0}(u)) = I.
\end{equation*}
One should think of $\eta_0$ and $\eta$ as prior and posterior measures, respectively, in a given context.
The column space of $Q$ is a covariance-based generalization of
the gradient-based 
LIS introduced in \cite{cui2014likelihood, cui2014dimension},
and will be referred to below as a cLIS.
Notice that the condition above is equivalent to
\begin{equation*}
C^{-1} = I + M M^\top,
\end{equation*}
where $M$ and $Q$ have the same column space,
and if $\sigma^2_{Q,i}$, $\sigma^2_{M,i}$ are the squared singular values (i.e., squared eigenvalues of
the matrix times its transpose) of matrices $Q$, $M$, respectively, then
$\sigma_{M,i}^2 = \sigma_{Q,i}^{2}/(1- \sigma_{Q,i}^{2})$,
for $1\le i \le m$, once the appropriate ordering is applied.

It is often the case that in practice (\ref{eq:idea}) holds only approximately,
in the sense that
$C \approx I - QQ^\top$ (see \cite{cui2014dimension} for applications);
however for simplicity of the presentation here we will assume that it
holds exactly. It is worth noting that for multimodal posterior distributions, the
posterior covariance might not be a negative-definite perturbation of
the prior, as in (\ref{eq:idea}); in such cases the perturbation could
turn out to be indefinite or positive.
Indeed, for multi-modal targets, Gaussian proposals might be ineffective
and 
a mixture approach could be required.
This important issue is beyond the scope of the present work and is
not considered further.
%
%

We want to estimate $C$  and, more importantly,
the column space of $Q$, using the particles $\{ u^i \}_{i=1}^N$.
The simplest way this can be done is the following. 
Assume for simplicity that we know the rank  $m$.
We construct a sample approximation of the low-rank correction to the
covariance as
 \begin{equation}
 H_N := I - \frac1{N-1} \sum_{i=1}^N (u^i - \bar u) \otimes (u^i - \bar u),
 \qquad \bar u = \frac1N\sum_{i=1}^N u^i \, .
\label{eq:samplecorrect}
 \end{equation}
Now, we use an iterative algorithm, such as the Lanczos iteration, to compute the dominant $m$ eigenpairs 
giving rise to
$P_{N,m}\in \bbR^{d\times m}$, and a diagonal $\Lambda_{N,m} \in \bbR^{m\times m}$
(with the diagonal comprised of the $m$ dominant eigenvalues)
so that
$H_N \approx P_{N,m} \Lambda_{N,m} (P_{N,m})^\top$.
The (orthonormal) columns of $P_{N,m}$ correspond to the
$N$-sample approximation of the $m$-dimensional cLIS.
Simulations indicate that as long as $N>d$, this approach provides a reasonable approximation of the cLIS.
Indeed, \eqref{eq:idea} may be seen as an inverse version of the {\it spiked covariance model} from
\cite{donoho2013optimal}.
There it was shown that this is in fact the required number of samples
as $d\rightarrow \infty$, and explicit error bounds are provided.
See also \cite{zahm2016covariance} for further exploration of this
point.

For a simple example,
see Figure \ref{fig:lis_converge} where we consider 20 random Gaussian targets
with $d=100$ of known low rank $m=10$,
i.e., with a 10-dimensional cLIS.
These random Gaussian targets were constructed as follows.
For $k=1,\dots, 20$, $i=1,\dots, d$, and $j=1,\dots m$, let $A_{ij}^{(k)} \sim N(0,1)$, independently over $i,j$.
Let $C^{(k)}=(A^{(k)}A^{(k),\top} + I_d)^{-1}$.
The $k^{th}$ Gaussian is given by $N(0,C^{(k)})$, $k=1,\ldots, 20$.
The cLIS is approximated using \eqref{eq:samplecorrect} with i.i.d.\@ samples (Figure \ref{fig:lis_converge}, left panel) and highly correlated samples (Figure \ref{fig:lis_converge}, right panel),
and the cLIS fidelity is approximated using
\begin{equation}\label{eq:liserror}
{\rm fidelity} = \| PP^\top ( I - P_{N,m}P_{N,m}^{\top} ) \|,
\end{equation}
where $P$ is the matrix whose orthonormal columns comprise the exact $m$-dimensional
cLIS (analytically known in this synthetic example) and $\|\cdot\|$ indicates the Frobenius norm.
The rationale behind this non-symmetric subspace divergence is that we are particularly concerned
with how well $P_{N,m}$ approximates $P$, i.e., with the projection of $P_{N,m}$ onto $P$.
Note that a weighted subspace distance \cite{ieeepeople} as advocated
in \cite{cui2014likelihood}
can be used to favor recovery of the
most important directions of the cLIS; alternatively one might use a
modification of the F{\"o}rstner \cite{forstner2003metric} metric
between SPD matrices, as proposed in
\cite{cui2014dimension}. 
Also, note that these ideal error metrics cannot be computed in practice, since
we do not have access to $P$.  

\begin{figure}[h]
\centering
\includegraphics[width=0.48\columnwidth]{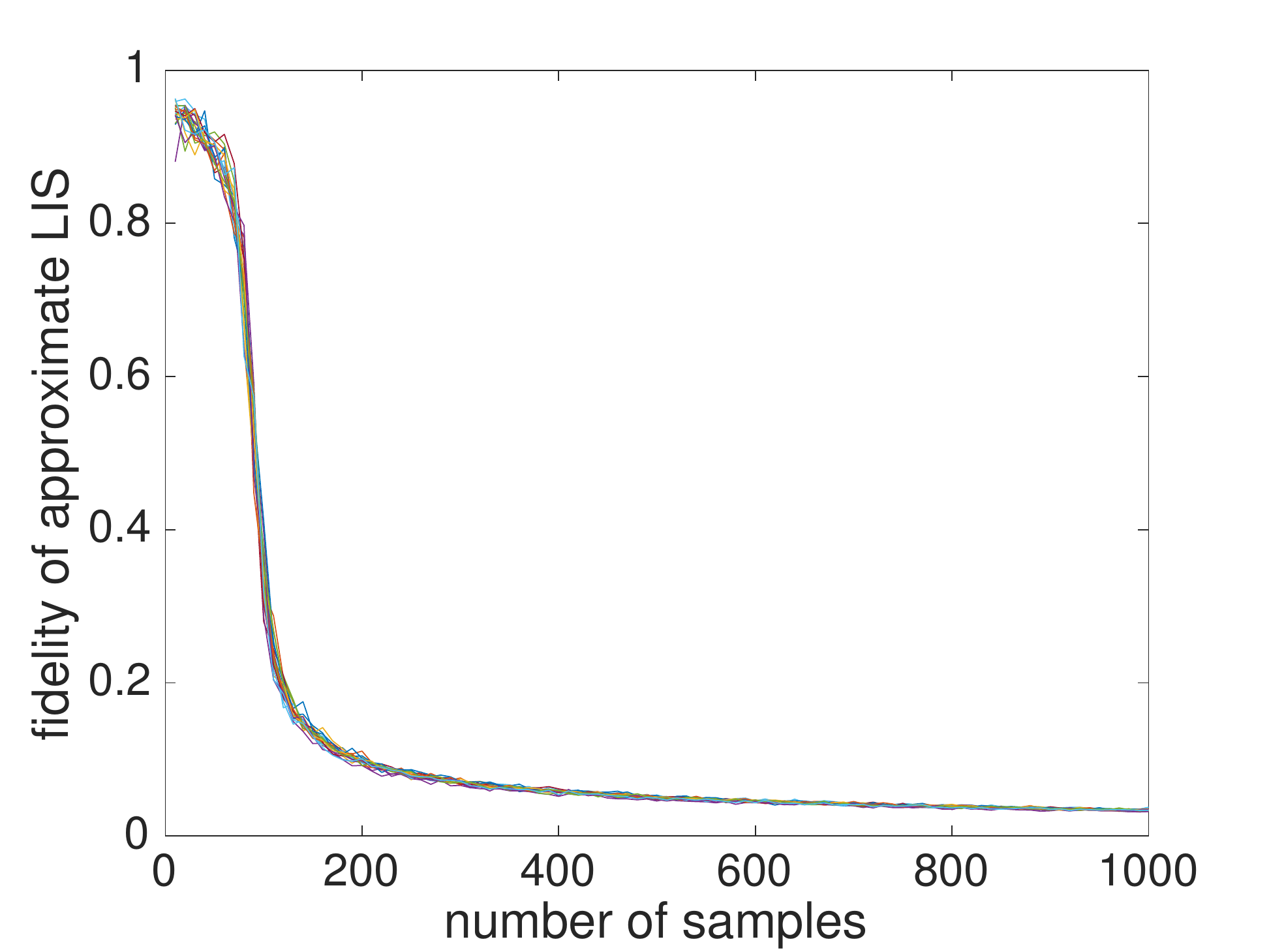}\ ,
\includegraphics[width=0.48\columnwidth]{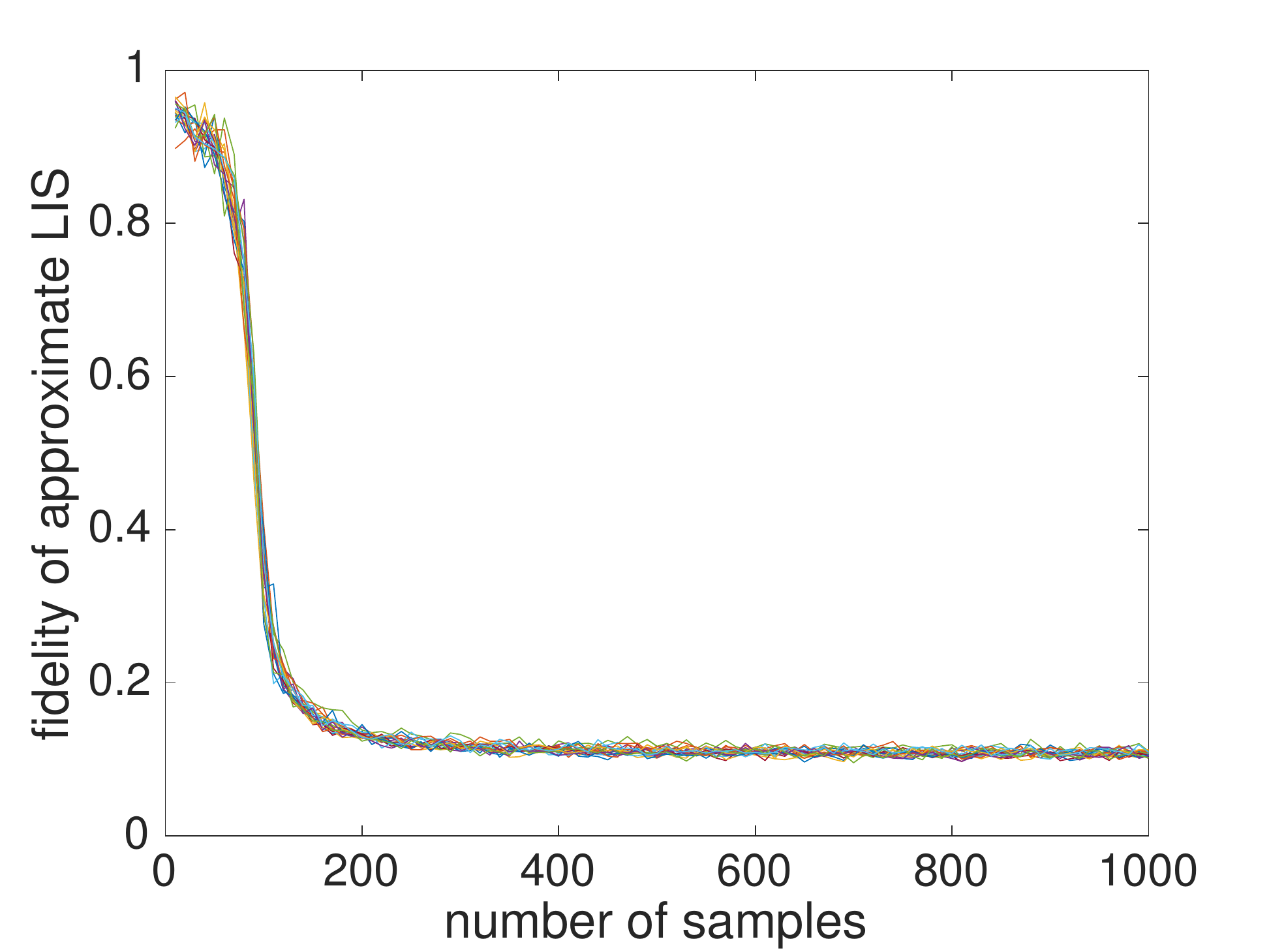}
\caption{Left panel: cLIS fidelity as a function of number of samples
  for 20  random targets, using i.i.d.\@  samples. Right panel: cLIS
  fidelity as a function of number of samples for random targets,
  using correlated samples with IACT=100. This example illustrates the
  fast convergence of the sample estimate of the cLIS in both cases.}
\label{fig:lis_converge}
\end{figure}


\begin{remark}
In general, the cLIS construction may miss local features
that can be
captured by the original gradient-based LIS of \cite{cui2014likelihood}.
However, the cLIS will ultimately be used here only for construction of a \emph{Gaussian} proposal,
and it is unclear what benefit a more sensitive gradient-based LIS
would offer for this purpose.
As a simple example, consider a $d$-dimensional measure which is bimodal
in one data-informed direction, e.g., with density proportional to
$\exp\{-\frac1{2\gamma^2} ((y_1-u_1^2)^2 + (y_2 - u_2)^2 ) - \frac12 |u|^2\}$.
As $\gamma \rightarrow 0$, the averaged Hessian used to build an LIS in
\cite{cui2014dimension}  will be large across $(u_1,u_2)$,
which both are clearly informed by the data.
The cLIS, though, will only identify $u_2$.
Nonetheless, a global Gaussian proposal constructed using
\textit{either} subspace will have difficulty sampling the target.
\end{remark}

\subsection{Estimating the Dimension of the cLIS}
\label{sec:em}

It is critical to develop a method to automatically estimate the cLIS dimension $m$ in realistic scenarios,
where one may know or suspect that there exists a low-dimensional subspace informed by the data of some unknown dimension $m\ge 1$.
 For this task, the following algorithm is proposed.

Let $\tilde h_N $ denote the full vector of $d$ eigenvalues of matrix $H_N$ in (\ref{eq:samplecorrect}),
sorted in decreasing order, and let $h_N = \tilde h_N \mathbf{1}_{\{\tilde h_N\geq 0\}}$.
{We truncate the negative eigenvalues, as there may be some large negative
eigenvalues when the sample covariance approximation is poor,
while the cLIS approximation can actually already be adequate.
This also prevents issues arising when a perturbation from the prior is not
negative definite,
as might occur with multi-modal posteriors.
Now define, for $i=1,\dots d-1$,
\begin{equation*}
(\widetilde{\Delta h_N})_i = |h_{N,i+1}-h_{N,i}|, \quad
\overline{\Delta h_N} = \frac{1}{d-1} \sum_{i=1}^{d-1}(\widetilde{\Delta h_N})_i,
\quad (\Delta h_N)_i = (\widetilde{\Delta h_N})_i/(\overline{\Delta h_N})_i.
\end{equation*}
It will suffice to find the
index $i_{\rm ex}$ such that $(\Delta h_N)_{i_{\rm ex}} >{\rm TOL}$,
{where ${\rm TOL}$ is some pre-specified reasonable value in between the sample error
and the expected size of the gap in the spectrum at convergence.}
This index,
effectively the position where the relative absolute difference in the
eigenvalues delivers a `spike,'
is then taken as the estimate of $\hat m = i_{\rm ex}$.

Figure \ref{fig:find_m} applies this approach to the target $N(0,C^{(1)})$, where $C^{(1)}$ is constructed as described above \eqref{eq:liserror}, i.e., one of the targets from
Figure~\ref{fig:lis_converge}.
The left panel of Figure \ref{fig:find_m} illustrates the growth of the gap in spectrum beyond the threshold for the target $N(0,C^{(1)})$
from Figure \ref{fig:lis_converge},
where the horizontal axis indicates the index of the non-zero values of $h_N$, and the vertical axis shows $\Delta h_N$ for $N=d, d+m, \dots, 250$.
The threshold value is set to ${\rm TOL}=10$, and $(\Delta h_N)_{i_{\rm ex}}$ exceeds this value already for $N=250$ samples,
with the correct value of $m=i_{\rm ex}=10$.
It is clear that in this case
the increments $\Delta h_N$ show a spike at  at the correct value $m=10$, for large enough sample-size.
Notice also that the right panel in Figure \ref{fig:find_m} illustrates that, in this example,
the sample approximation of cLIS is less challenging than the sample
approximation of the covariance matrix. 

\begin{figure}[!h]
\centering
\subfigure[Sequence of increments in vector  $\Delta h_N$,
for various choice of sample size $N$, for a given
target distribution $N(0,C^{(1)})$.]{
\includegraphics[width=0.48\columnwidth]{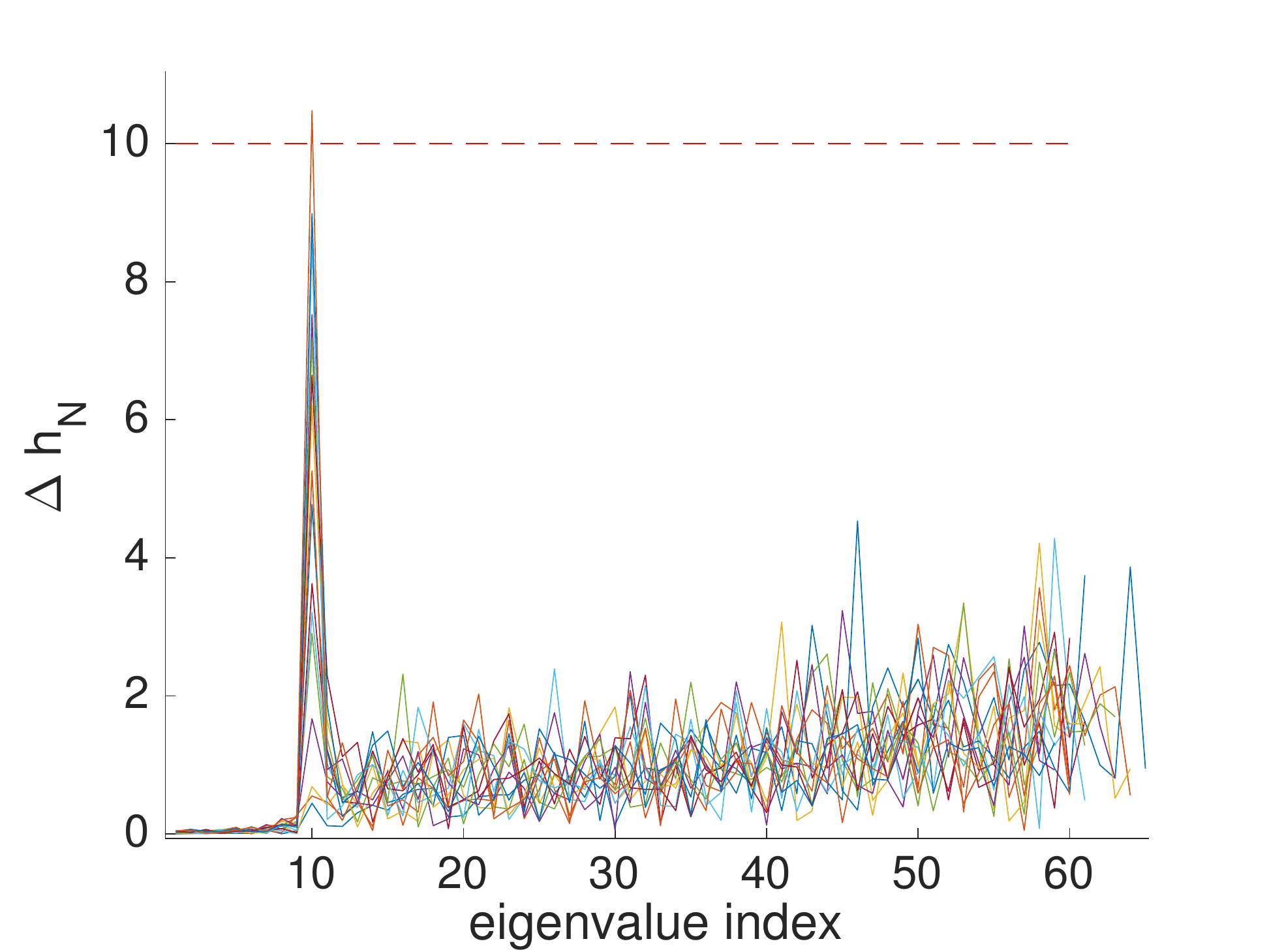}} 
\hspace{5pt}\subfigure[Relative fidelity of the cLIS approximation in comparison to relative fidelity of the sample covariance, denoted by $S$, in terms of number of samples.]
{\includegraphics[width=0.45\columnwidth]{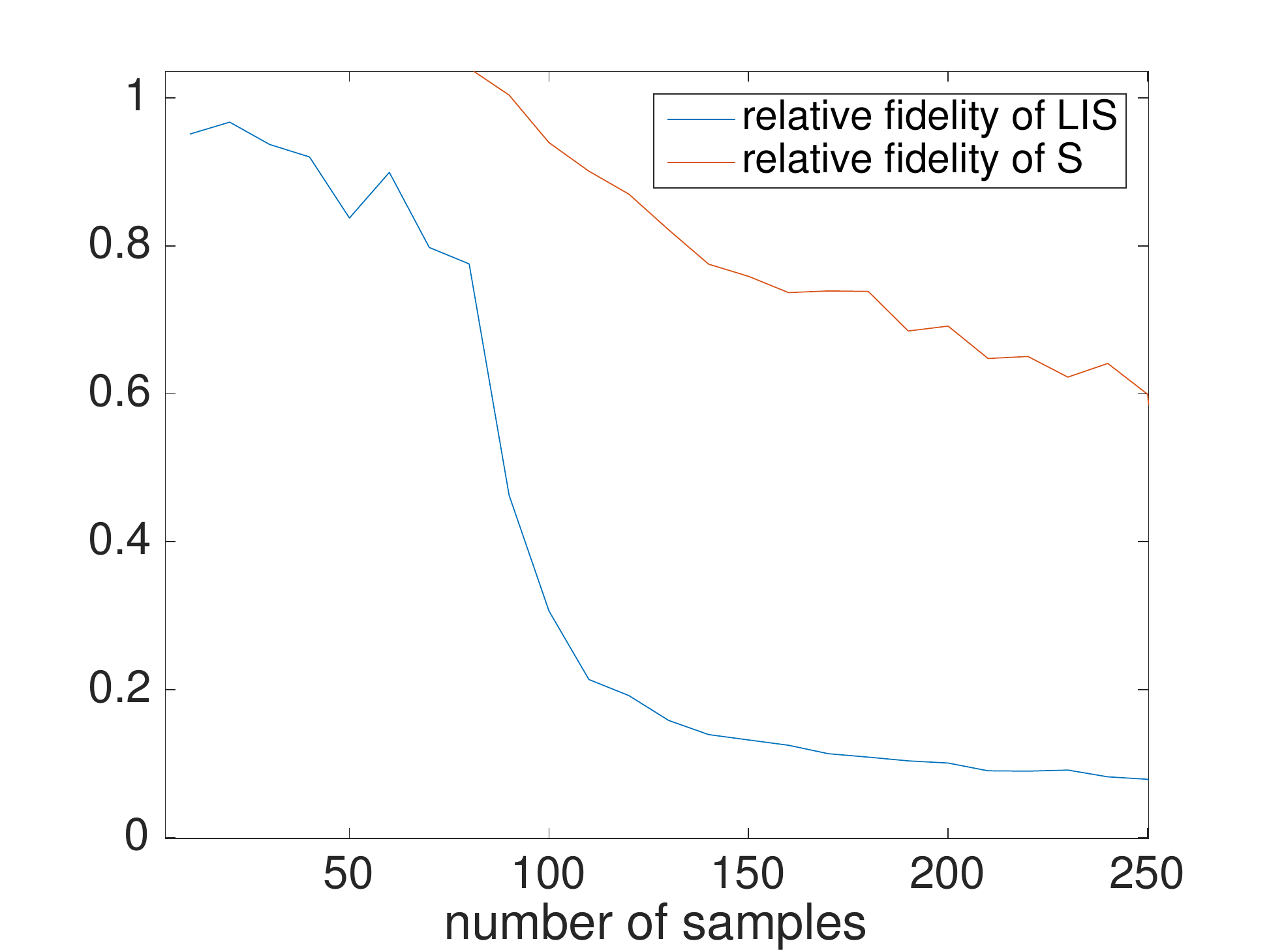}}
\caption{Convergence of approximate cLIS.}
\label{fig:find_m}
\end{figure}

\subsection{Use of a Subspace at Mutation Step}
\label{sec:dili}

Mutation steps in our SMC algorithm will use a  
DILI proposal,
defined abstractly as follows.
Consider a 
subspace determined by the collection of orthonormal vectors $P= [e_1, e_2, \ldots, e_m]$,
spanning an $m$-dimensional subspace of $\mathbb{R}^{d}$,
together with  an approximation of the mean $\bar u \approx \bbE_\eta u$
and the covariance of the coordinates $(\langle u, e_i \rangle)_{i=1}^{m} $,
$\Sigma\approx  P^\top C P \in \bbR^{m\times m}$.
We will make use of the orthogonal decomposition
\begin{equation*}
u  =  PP^{\top}u  + (I-PP^{\top})u
\end{equation*}
where $PP^{\top}u$ is the orthogonal projection of $u$ on the subspace.
Let $u' \sim Q(u,\cdot)$ be defined by
\begin{equation}
 u' = \bar u + A  (u -\bar u) + B w, \quad w\sim N(0,I),
 \label{eq:diliprop}
 \end{equation}
where we have defined
\begin{align}
A & =  P (I_m - b_m \Sigma)^{1/2} P^\top + (1- b_\perp^2)^{1/2} (I-PP^\top), \label{eq:pro}\\
B & =  P \sqrt{b_m \Sigma} P^\top + b_\perp(I-PP^\top),
\label{eq:props}
\end{align}
and $b_m, b_\perp \in (0,1)$ are small step sizes on and off the subspace, respectively.
The second summands in (\ref{eq:pro})--(\ref{eq:props}) correspond to a preconditioned Crank-Nicolson (pCN)
step on the space orthogonal to the subspace, while the first summands
correspond to a step that uses
the 
covariance $\Sigma$ to scale the step sizes across the various directions of the subspace.
All matrix operations are carried out via the eigendecomposition
of the symmetric, positive semi-definite $\Sigma$.
The matrices weighting the proposal satisfy $A^2+B^2=I$ and take into account
appropriately the
covariance (likelihood) information. The proposal $Q(u,\cdot)$
is reversible with respect to $\mu := N(0,I)$, in the sense that
$$
\mu(du) Q(u,dv) = \mu(dv) Q(v,du).
$$
Note this implies $\int_E \mu(du) Q(u,dv) = \mu(dv)$.
The above proposal therefore provides an effective \emph{dimension-independent} (DI)
proposal for whitened Gaussian prior \mbox{$\eta_0=\mu$}, as the algorithm
is well-defined even on infinite-dimensional separable Hilbert spaces (i.e., even if $d=\infty$).
In the case of non-white priors,
e.g., for the standard assumption of a covariance which is trace-class operator, 
one must simply employ a change of variables. (See
\cite{cui2014dimension} for more details on this construction.)

{Recall the assumption from Section \ref{sec:inf} that the covariance is a
negative semi-definite perturbation of the prior,
so that if $\Sigma$ is the exact covariance, 
then $I_m - b_m \Sigma$ is guaranteed to
be positive semi-definite (and vanishing only off the true cLIS and
when $b_m=1$). 
When the approximate cLIS is constructed from samples in
practice one must take care to ensure non-negativity of
$I_m - b_m \Sigma$.
%

Note that as long as the proposal is split according to
a rotation induced by an operator $P$ with a 
finite range 
$m$,  then
{\em any proposal} can be used on the subspace spanned by
$P$ and the DI property will
be preserved.  However, the proposal  
should be chosen such that the resulting Metropolis-Hastings algorithm 
is convergent, as the above algorithm
is proven to be in \cite{rudolf2015generalization}.
If derivatives were available, we may use them on the cLIS part of
the proposal above to construct manifold-based proposals, as was recently
done in \cite{beskos2016geometric}.
The following proposal, which preserves the 
Gaussian approximation of the posterior on the cLIS
(instead of the prior) is not, in general,  geometrically ergodic
\begin{align*}
A & =  (1-b_m)^{1/2} P P^\top + (1- b_\perp^2)^{1/2} (I-PP^\top), \\
B & =  P \sqrt{b_m \Sigma} P^\top + b_\perp(I-PP^\top).
\label{eq:propa}
\end{align*}
In particular, it is shown in \cite{mengersen1996rates} Theorem 2.1
that this proposal is not ergodic for
$b=b_m=b_\perp=1$, for a wide range of target distributions,
including Gaussians with a covariance larger than $\Sigma$
on the 
subspace. 
This property is expected to hold for $b_m<1$ as well.
In \cite{chen2015accelerated} it is suggested simply to scale the covariance $\Sigma$
by a factor $(1+\epsilon)$, for $\epsilon>0$.
This strategy works in practice, but 
the downside is that there is no clear criterion for the choice of $\epsilon$.

\subsection{Multilevel cLIS}
\label{sec:mlis}

We will now embed the cLIS methodology within a multilevel sampling
framework. The idea here is that the cLIS is expected to converge at
some level of mesh refinement that is less accurate than the
final level required by the MLSMC algorithm, so that the cLIS can then
be embedded into higher levels at a nominal cost.
Furthermore, the telescopic identity can be leveraged along the way to
improve the cost of the algorithm.
Some justification/motivation for this idea lies in the typical structure of
Bayesian inverse problems: with a smoothing forward operator and/or
limited data, the likelihood-informed directions (the span of the cLIS)
tend to be relatively smooth. See \cite{cui2014likelihood} for an
example of the LIS basis converging under mesh refinement.

\subsubsection{Setting}
\label{sec:setting}

Recall that in the setting of Section \ref{sec:setup}, we are interested in a sequence
of unnormalized densities $\kappa_\ell(u_{0:\ell})$ in (\ref{eq:seq_kappa}) defined on spaces of increasing dimension $E_\ell$, for levels $\ell=0,1,\ldots$.
Let $h_\ell$ denote a resolution parameter and $\cC_\ell$  the associated computational cost  of evaluating $\kappa_\ell(u_{0:\ell})$,
such that $h_\ell \rightarrow 0$  and $\cC_\ell\rightarrow  \infty$ as $\ell \rightarrow \infty$,
and assume that the computational cost is dominated by a forward model $\cG_\ell(u_{0:\ell})$ involved in the likelihood calculation, as in \eqref{eq:seq_kappa}.
In particular, consider the case in which the sequence of spaces $E_0, E_1, ...., E_L \subset E$
correspond to finite-dimensional approximations (of increasing dimension) of a
limiting space $E:=E_\infty$,
where $E$ is a separable Hilbert space, and $u\in E$.

In order to establish a clear context, let
$\phi_1, \phi_2,\dots \in E$ and define
{${\Psi}_\ell := [\phi_1, \dots, \phi_{d_\ell}] \in E \times \bbR^{d_\ell}$}.
Using matrix notation, let
$$
E_\ell = (\Psi_\ell^\top \Psi_\ell)^{-1}\Psi_\ell^\top E.
$$
Letting 
$u_{0:\ell} = (\Psi_\ell^\top \Psi_\ell)^{-1}\Psi_\ell^\top u$,
then {$\Psi_\ell u_{0:\ell}$} is the orthogonal
projection of $u$ onto the $d_\ell$-dimensional
subspace of $E$ spanned by the columns of $\Psi_\ell$.
In the following $u_{0:\ell}$ may also correspond to the value of $u$ at $d_\ell$ grid points,
with $\Psi_\ell u_{0:\ell}$ an interpolant through those points.
In the limit, isomorphic representations of $E$ will be identified,
i.e., spatial representations or sequence representations in terms of expansion coefficients.
Suppose:
\begin{itemize}
\item{One has a regularly structured grid that is uniform across $D$.}
\item{An underlying spatio-temporal
dimension of the limiting space $E$, for example $L^2([0,1]^{D},\mathbb{R})$.} 
\item{The grid-spacing is $h_\ell$.}
\end{itemize}
Then 
the dimension of $E_\ell$ is $d_\ell = h_\ell^{-D}$. 
{Conversely, for an arbitrary expansion,
for example in terms of some family of orthonormal polynomials,
with equal numbers of basis functions in
each direction, it is reasonable to {\em define} $h_\ell := d_\ell^{-1/D}$.}
These notions are therefore interchangeable.
According to the simulated examples,
the cLIS associated with $E_\ell$ is expected to require $\cO(d_\ell)$ samples to identify.
Let $P_\ell \in \bbR^{d_\ell \times m_\ell}$ denote an orthonormal basis for the
$m_\ell-$dimensional cLIS at level $\ell$, so that
$$
C_\ell = I_{d_\ell} - Q_\ell Q_\ell^\top,
$$
where $Q_\ell = P_\ell \Lambda_\ell^{1/2}$,
for some diagonal matrix $\Lambda_\ell$ of non-zero singular values,
and $I_{d_\ell}$ is the ${d_\ell}\times{d_\ell}$ identity matrix.
We set $m = \lim_{\ell \rightarrow \infty} m_\ell$ and {let $P$ denote
the limiting $m-$dimensional cLIS on $E$.}
It is reasonable to assume that for $\ell$ sufficiently large, $m_\ell \approx m$,
{and ${\rm span} \Psi_\ell P_\ell$
will already provide a good approximation of $P$.}


The idea is that at some level $\ell^*$ in the MLSMC algorithm,
one stops constructing the cLIS
and the current cLIS $P_{\ell^*} \subset E_{\ell^*}$
is simply embedded into $E_{\ell*+n}$ for $n\geq 1$.
In this way, one can use the empirical covariance on the cLIS, at an $m-$dependent cost,
for a DILI proposal without recomputing the cLIS on higher levels.
{Henceforth the cLIS is constructed without reference to subsequent samples.
Therefore, within the MLSMC context,
one needs to collect at least $d_\ell$ samples for $\ell<\ell^*$,
but the restriction does not persist for $\ell>\ell^*$.
The implication of this is discussed in more detail in subsubsection \ref{sssec:mlcost2}.}

The rest of this subsection will be organized as follows.
The form of the importance sampling proposal will be described in Section \ref{sssec:propcon}.
The embedding of the cLIS will be described in Section \ref{sssec:embed},
and the multilevel covariance construction using the cLIS will be described in
Section \ref{sssec:mlco}.  The additional multilevel cost considerations
due to the DILI mutations are considered in Section \ref{sssec:mlcost2},
and finally an example of the framework for a simple basis is mentioned in
Section \ref{sssec:exam}.

%

\subsubsection{Importance Sampling Proposal To Extend Dimension}
\label{sssec:propcon}

The mutation kernel $K_\ell$ will be constructed through the DILI methodology in Section \ref{sec:dili}.
It  remains to determine the kernel $q_\ell:E_{\ell-1}\rightarrow\mathcal{P}(U_\ell)$ that extends
the dimension of the state space during the iterative importance sampling steps.
In both numerical applications in Section~\ref{sec:numerics} we employ regular grids
in 1D and 2D of increasing resolution; other options could involve truncating the Karhunen-Lo\`eve
expansion of the prior Gaussian measure.

In our applications we have used the Gaussian prior dynamics to determine $q_{\ell}$,
so that
\begin{equation*}
q_\ell(u_{0:\ell-1},du_\ell) = \mu_0( du_\ell |  u_{0:\ell-1} )\ ,
\end{equation*}
though other choices could also be made.
This choice gives  
$$G_\ell(u_{0:\ell}) = \cL_{\ell}(u_{0:\ell})/\cL_{\ell-1}(u_{0:\ell-1}).$$
From standard properties of Gaussian laws,
assuming that $\mu_0(du_{0:\ell})= N(0,\Gamma_{0:\ell})$ with covariance
\begin{equation*}
\Gamma_{0:\ell} = \left(
\begin{array}{cc}
\Gamma_{0:(\ell-1)}  & \Gamma_{0:(\ell-1),\ell} \\
 \Gamma_{0:(\ell-1),\ell}^{\top}  & \Gamma_{\ell,\ell}
\end{array}
\right) \, ,
\end{equation*}
%
with $\Gamma_{0:(\ell-1)} \in \bbR^{d_{\ell-1}\times d_{\ell-1}}$,
$\Gamma_{\ell,\ell}\in \bbR^{d'_\ell \times d'_\ell}$, $\Gamma_{0:(\ell-1),\ell}\in \bbR^{d_{\ell-1} \times d'_\ell}$,
we have
%
\begin{equation}\label{eq:klg}
q_\ell(u_{0:\ell-1},\cdot) =
\Gamma_{0:(\ell-1),\ell}^{\top}\,\Gamma_{0:(\ell-1)}^{-1} \, u_{0:\ell-1} +
N\big(0, \Gamma_{\ell} \big) \, ,
\end{equation}
where
$$
\Gamma_\ell := \Gamma_{\ell,\ell} - \Gamma_{0:(\ell-1),\ell}^{\top}\, \Gamma_{0:(\ell-1)}^{-1} \Gamma_{0:(\ell-1),\ell} \, .
$$
%


\subsubsection{cLIS Construction when Extending Dimension}
\label{sssec:embed}
Recall that the main idea in Section \ref{sec:setting} is that one will reach
a cut-off level, say $\ell-1$, when the standard cLIS methodology will be applied
using the particle information available at this point, as described in Section \ref{sec:dili},
but from level $\ell$ onwards  the cLIS  will simply be propagated forwards
without Monte Carlo effort to identify further directions informed by the likelihood.  We will
now describe how to carry out this propagation.

The construction of the cLIS proposal  in Section  \ref{sec:dili}
requires the identification of an orthonormal set of vectors spanning the critical
subspace informed by the likelihood after whitening the prior covariance. That is,  one must in practice work with the linear transformation $v_{0:\ell-1}=L_{\ell-1}^{-1} u_{0:\ell-1}$,
where $L_{\ell-1}$ is any matrix such that $L_{\ell-1} L_{\ell-1}^\top = \Gamma_{0:\ell-1}$.
Notice that in many cases (e.g.,\@ if the prior is a Gaussian Markov random field) $L_\ell^{-1}$ is sparse, so
this operation is cheap.
Assume that the columns of
matrix $P_{\ell-1}\in \mathbb{R}^{d_{\ell-1}\times m}$ correspond to the orthonormal basis of cLIS
at the cut-off level $\ell-1$, so that $P_{\ell-1}^{\top}P_{\ell-1}=\Lambda_{\ell-1}$ for a diagonal
matrix $\Lambda_{\ell-1}\in \mathbb{R}^{m\times m}$. We will identify $P_\ell$.

It will be convenient to define the matrices
\begin{equation} \label{eq:ay1}
A_{\ell | \ell-1} = \left(
\begin{array}{cc}
I_{d_{\ell-1}} \\
\Gamma_{0:(\ell-1),\ell}^\top \Gamma_{0:(\ell-1)}^{-1}
\end{array}
\right) \, , \qquad
A_{\ell \backslash \ell-1} = \left(
\begin{array}{cc}
 0_{d_{\ell-1}\times d_\ell'} \\
 \Gamma_{\ell}^{1/2}
\end{array}
\right) \, ,
\end{equation}
where $I_{d_{\ell-1}}$ is the $d_{\ell-1}\times d_{\ell-1}$ identity matrix and
$0_{d_{\ell-1}\times d_\ell'}$ is the $d_{\ell-1}\times d'_{\ell}$ matrix of all zeros.
Then for $u_\ell^* \sim q_\ell(u_{0:\ell-1},\cdot)$, one has%
\begin{equation*}
(u_{0:\ell-1}^{\top}, u_\ell^{*,\top})^\top = A_{\ell | \ell-1} u_{0:\ell-1} + A_{\ell \backslash \ell-1} \xi_\ell \, ,
\end{equation*}
where $\xi_\ell \sim N(0,I_{d'_\ell})$.
%
By definition, we have
\begin{equation}
\Gamma_{0:\ell} =  A_{\ell|\ell-1} \Gamma_{0:\ell-1} A_{\ell|\ell-1}^\top +
A_{\ell \backslash \ell-1}A_{\ell\backslash \ell-1}^\top.
\label{eq:prior1}
\end{equation}
It is less obvious that
\begin{equation}
\Gamma_{0:\ell-1}^{-1} =  A_{\ell|\ell-1}^\top \Gamma_{0:\ell}^{-1}A_{\ell|\ell-1}.
\label{eq:ortho}
\end{equation}
%
%
%
To see this, observe that
the first 
$d_{\ell-1}$ rows of $\Gamma_{0:\ell}$ are given by
$\Gamma_{0:\ell-1} A_{\ell | \ell-1}^\top$.
It is then 
clear that
\begin{equation*}
A_{\ell | \ell-1}^\top \Gamma_{0:\ell}^{-1} = \Gamma_{0:(\ell-1)}^{-1} \left(
\begin{array}{cc}
 I_{d_{\ell-1}}&  0_{d_{\ell-1} \times d_\ell'} \end{array}
\right) \, ,
\end{equation*}
and the second identity follows immediately.
Due to (\ref{eq:prior1}), (\ref{eq:ortho}) it is easy to check that
\begin{equation*}
L_{\ell}^{-1}\,\Gamma_{0:\ell}\,L_{\ell} = I_{d_\ell} - P_{\ell}\,\Lambda_{\ell}\,P_{\ell}^{\top}
\end{equation*}
where $L_\ell$ is such that $L_{\ell}L_{\ell}^{\top}=\Gamma_{0:\ell}$ and
\begin{equation*}
P_{\ell} = L_\ell^{-1}A_{\ell|\ell-1} L_{\ell-1} P_{\ell-1}\ , \quad P_{\ell}^{\top}P_{\ell}=: \Lambda_{\ell}
\equiv \Lambda_{\ell-1}\ .
\end{equation*}
%
This ensures that an orthonormal cLIS at the cut-off level $\ell-1$
transforms to an orthonormal cLIS at level $\ell$ through the map
$P_{\ell-1} \mapsto L_\ell^{-1}A_{\ell|\ell-1} L_{\ell-1} P_{\ell-1}$.

\subsubsection{Multilevel Covariance Estimation}
\label{sssec:mlco}

The 
covariance $C_\ell$ can also be estimated 
with the multilevel estimator \cite{bierig2015convergence, hoel2015multilevel}
\begin{equation}\label{eq:mlcov}
C^{\rm ML}_\ell \approx C_0^{N_0} + \sum_{l=1}^\ell \left( C^{N_l}_l - C^{N_l}_{l-1} \right),
\end{equation}
where $C^{N_l}_l = \frac1{N_l} \sum_{i=1}^{N_l} u_{0:l}^i(l) (u_{0:l}^i(l))^\top -
\left ( \frac1{N_l} \sum_{i=1}^{N_l} u_{0:l}^i(l) \right) \left ( \frac1{N_l} \sum_{i=1}^{N_l} u_{0:l}^i(l) \right)^\top$,
and $C^{N_l}_{l-1}$ is the appropriate upscaled (so that matrix dimensions match in \ref{eq:mlcov})) sample covariance associated with $u_{0:l-1}(l)$.
This will give rise to the multilevel cLIS approximation $P^{\rm ML}_{\ell}$,
which 
will be used to approximate the covariance on the approximate cLIS,
$\Sigma_{\ell} = P^{{\rm ML},\top}_{\ell} C_\ell P^{\rm ML}_{\ell}$, by
$$
\Sigma^{\rm ML}_{\ell} \approx P^{{\rm ML},\top}_{\ell} C_0^{N_0}P^{\rm ML}_{\ell} +
\sum_{l=1}^\ell P^{{\rm ML},\top}_{\ell}\left( C^{N_l}_l - C^{N_l}_{l-1} \right)P^{\rm ML}_{\ell}.
$$

Consider
$A_{\ell+n|\ell} = A_{\ell+n|\ell+n-1} A_{\ell+n-1|\ell+n-2} \cdots A_{\ell+1|\ell}$
for $A_{l|l-1}$, $l \ge 1$, defined in (\ref{eq:ay1}).
As mentioned in Section \ref{sssec:embed}, the cLIS will be constructed only until some level $\ell^*$,
after which it will be transformed to higher levels with the involvement  this operator.
The cLIS $P^{\rm ML}_{\ell^*}\in \bbR^{d_{\ell^*} \times m}$, constructed at the final level
using \eqref{eq:mlcov}, is transformed into the cLIS at higher levels
$P^{\rm ML}_{\ell^*+n}\in \bbR^{d_{\ell^*+n} \times m}$ by
\begin{equation}\label{eq:listrans}
P^{\rm ML}_{\ell^*+n} = L_{\ell^*+n}^{-1} A_{\ell^*+n|\ell^*} L_{\ell^*} P^{\rm ML}_{\ell^*},
\end{equation}
 where orthonormality of the column vectors of $P^{\rm ML}_{\ell^*+n}$ holds by transitivity and \eqref{eq:ortho}.
Recall that $L_\ell^{-1}$ is often sparse, for example for a Gaussian Markov random field,
and thus cheap to compute.
 Also, $L_\ell$ itself may be sparse, or have a simple structure which
 allows for cheap
 (i.e., not $\cO(d_n\ell^2)$) operations, as will be the case in Section \ref{ssec:diff} below.


\subsubsection{Multilevel Cost Considerations}
\label{sssec:mlcost2}

The multilevel analysis proceeds as in a standard case, except one has to consider that
if $\ell>\ell^*$ then $\cC_\ell = h_\ell^{-\gamma D}$  and
if $\ell\leq\ell^*$ then $\cC_\ell = h_\ell^{-3 D}$.
{Note that the cubic power corresponds to the worst case scenario for the cost of computing the cLIS,
while it may be possible in cases to compute
it more cheaply, e.g.,\@ even with linear cost}.
Assuming we fix $\ell^*$,
then asymptotically the use of the cLIS does not change the error analysis of the estimates.
One has $N_\ell = h_\ell^{(\beta+\gamma)/2}$ if $\ell>\ell^*$
and $N_\ell = h_\ell^{(\beta+3)/2}$ if $\ell\leq\ell^*$.
More careful analysis can be done, e.g.,\@ using the rate of convergence of the cLIS
to choose $\ell^*$, 
but since this construction is merely to improve mixing of the MCMC kernels,
it is reasonable to simply 
fix $\ell^*$ 
and ensure that $N_0$ is chosen large enough that $N_{\ell^*}>d_{\ell^*}$.

\subsubsection{Example with a Karhunen-Lo\`eve Basis}
\label{sssec:exam}

We end this subsection with a comment that
the spaces $\{E_\ell\}_{\ell=0}^L$ could be determined via a
Karhunen-Lo\`eve expansion as described below.

%
Let $\mu_0$ be the prior distribution 
over the infinite-dimensional separable  Hilbert space $E$, 
which will be assumed Gaussian with mean $0$ and trace-class covariance operator $\Gamma$.
There is an orthonormal basis $\{\phi_i\}_{i=1}^\infty$ for $E$ and
associated eigenvalues $\{\lambda_i\}_{i=1}^\infty$ such that $\Gamma\phi_i = \lambda_i \phi_i$.
The {\em Karhunen-Lo\`eve} expansion of a draw $u\sim \mu_0$ is given by
\begin{equation}\label{eq:kl}
u = \sum_{i=1}^\infty x_i \phi_i, \quad {\rm where} ~~
x_i = \langle u , \phi_i \rangle = \lambda_i^{1/2} \xi_i,
\quad {\rm and}~~ \xi_i\sim N(0,1) ~~ i.i.d..
\end{equation}
%
%
Thus, the covariance operator  $\Gamma$ is diagonal in the basis $\Psi_\infty=[\phi_1, \phi_2,\ldots]$.
In this setting it is natural to work with the coordinates $u_\ell = (x_{d_{\ell-1}+1}, \ldots, x_{d_{\ell}})$,
$0 \le \ell \le L$, so that we simply have
$L_\ell = \Gamma^{1/2}_{0:\ell} = \Psi_\ell^\top \Gamma^{1/2} \Psi_\ell = \mathrm{diag}\{\lambda_1^{1/2},\ldots,\lambda_{d_\ell}^{1/2}\}$.
Also, one has that
\begin{equation}\label{eq:klq}
q_\ell(u_{0:\ell-1},\cdot) = q_{\ell}(\cdot) =
N\left(0,{\rm diag}(\lambda_{d_{\ell-1}+1}, \lambda_{d_{\ell-1}+2},\dots, \lambda_{d_{\ell}}) \right) \, ,
\end{equation}
and $\Gamma_{0:(\ell-1),\ell}=0$ 
so
$P_{\ell^*+\ell} =
[P_{\ell^*}^\top, {\bf 0}_{ m_{\ell^*} \times (d_\ell-d_{\ell^*})}]^\top$,
where ${\bf 0}_{m\times n} \in \bbR^{m\times n}$ is a matrix of zeros.

\section{Examples}
\label{sec:numerics}


%
In this section, two models will be described and the assumptions on the selection functions
(A\ref{hyp:A}), (A\ref{hyp:C}) will be verified.
Before describing the examples in detail, we digress slightly to
discuss technical assumptions in Section \ref{ssec:restrict}.
Section \ref{ssec:diff} considers inversion of the white noise forcing in an
SDE given noisy observations of the path.
Section \ref{ssec:pde} considers a Bayesian inverse problem of inferring the
diffusion coefficient in a 2D elliptic PDE given noisy observations of
the solution field. 

\subsection{Restriction of prior measure}
\label{ssec:restrict}

In what follows, the prior measure $\mu_0$ will be Gaussian, and hence supported on
an unbounded space in principle.  The {\it restricted prior measure} is
\begin{equation*}
 \mu_{0,R}(du) := {\bf 1}_{S_R}(u)  \frac1{\mu_0(S_R)} \mu_0(du)\ , \quad
 S_R := \{u \in E ; 
 |u|_{L^\infty(\Omega)} \leq R\}\ ,
\end{equation*}
 for some $R>0$, where  $\Omega$ is the spatial/temporal domain.
 Note that provided $\mu_0(L^\infty(\Omega))=1$,
 for any $\varepsilon>0$, there exists a $R(\varepsilon)$ such
 that $|\mu_{0,R} - \mu_0|_{\rm TV} < \varepsilon$, as shown in
 \cite{rudolf2015generalization}.
 This restriction allows for a simple verification of assumptions
 (A\ref{hyp:A}) and  (A\ref{hyp:C}).
 In full generality one would have to carry out several technical proofs
that would obscure the main ideas of the ML approach.
It will be shown below that the restriction to  $S_R$
 will allow (A\ref{hyp:A}) and (A\ref{hyp:C}) to hold for the examples considered.
 Note that the bound on TV-norm implies a similar bound for the difference in
 expectation of bounded functionals, and functions with bounded second moments
 (via Hellinger metric, where the bound is replaced by $\varepsilon^{1/2}$,
 as shown in Lemma 1.30 of \cite{law2015data}).

Before continuing, assumption (A2) needs to be considered.
Theorem 20 of  \cite{rudolf2015generalization}  shows
 that  under conditions on the target distribution, the  Metropolis-Hastings algorithm with
 proposal~\eqref{eq:diliprop}
 restricted on $E_R$ has an $L^2(\mu_0)$ spectral gap
 (see also Corollary 4 of \cite{rudolf2015generalization} to verify that \eqref{eq:diliprop}, for $\bar u =0$,
 takes the appropriate so-called ``generalized pCN'' form).
 Therefore the proposal kernel $K_\ell$, conditionally on the current population of samples,
satisfies a spectral gap assumption.
It is beyond the scope of the present work to theoretically verify the validity of the algorithm for
this weaker property (relative to (A\ref{hyp:B})), so we shall content ourselves with the stronger assumption (A\ref{hyp:B}) and leave open the much more challenging
question of the algorithm's rigorous validity under weaker assumptions.
See also the recent work \cite{jasra17weak}
for consideration of weaker assumptions
in the case of the original MLSMC sampler algorithm on spaces of fixed dimensions of \cite{ourmlsmc}.

\subsection{Conditioned Diffusion}
\label{ssec:diff}

We consider an SDE scenario.
For $u$ denoting  a realisation of the $s$-dimensional Brownian motion, $s\ge 1$,
let $ p =p(u)$ be the solution of the SDE
\begin{equation}\label{eq:diff}
dp = f(p) dt + \sigma(p) d u, \quad p(0)=p_0,
\end{equation}
where $f: \bbR^s \mapsto \bbR^s$ and
$\sigma: \bbR^s \mapsto \bbR^{s\times s}$
are element-wise 
Lipschitz continuous,
with  $\sigma \in \bbR^{s\times s}$ non-degenerate.
For any $T>0$ 
there is a unique solution $p \in C(\Omega,\bbR^s)$ to \eqref{eq:diff},
with $\Omega=[0,T]$, and a map $u \mapsto p$
 continuous on $C(\Omega,\bbR^s) \mapsto C(\Omega,\bbR^s)$ with probability
 1 under the Wiener measure.
This is shown in Theorem 3.14 of \cite{hairer2011signal}, along
with the well-posedness of the corresponding  smoothing problem
below.
Note that since $C(\Omega,\bbR^s) \subset L^\infty(\Omega,\bbR^s)$
the prior Wiener measure can be restricted to some $S_R$ with arbitrarily small
effect.
%
%
Let $\cG_i(u) =  p(t_i;u)$, for times $0<t_1<\cdots <t_q \le T$, $q\ge 1$.
We consider observations $y|u \sim N(\cG(u),\Xi)$, where $\cG = (\cG_1^{\top},\ldots, \cG_q^{\top})^{\top}$,
with noise (of variance $\Xi$) independent of $u$,
so that the likelihood is  
%
\begin{equation}
\label{eq:like}
\cL(u) = \exp\left( -\tfrac12 |y-\cG(u)|_\Xi^2 \,\right).
\end{equation}

We will henceforth assume $s=1$, though multi-dimensional extensions are straightforward.
The standard Euler-Maruyama discretization is employed,
with refinement occurring via Brownian bridge sampling between
successive grid points; this is a particular scenario of the general description in Section \ref{sssec:propcon}.
The paths are generated on a uniform grid,
which gives rise to proposals of the form \eqref{eq:klg} under the prior Wiener measure dynamics.
In particular, let us assume that $d_\ell = d_0 \,2^{\ell}$,
so $h_\ell = T/d_\ell$; to avoid undue complications
$d_0$ is chosen large enough to accommodate the $q$
observations at grid points.
Then the linear transformations in \eqref{eq:klg}
are given for the case of the scalar $(s=1)$ SDE in \eqref{eq:diff}
by the following, for $i=1,\dots,d_{\ell}$
(the first, undefined, equation is ignored for $i=1$):
\begin{align*}
(A_{\ell+1|\ell})_{2i - 1, i-1} & =  (A_{\ell+1|\ell})_{2i - 1, i} = 1/2, \\
(A_{\ell+1|\ell})_{2i, 2i} & =  1, \\
(A_{\ell+1\backslash \ell})_{2i-1,i} & =  \sqrt{h_\ell}/2,
\end{align*}
and $(A_{\ell+1|\ell})_{j,k} = (A_{\ell+1\backslash \ell})_{j,k} = 0$ otherwise.
This is simply a way to write down the well-known Brownian
bridge measure for the fine grid points $u_{0:\ell+1}$---every other of which coincides
with one of the coarse grid points $u_{0:\ell}$, or bisects two of them.
The new bisecting points $u_{\ell+1}$ are conditionally independent given $u_{0:\ell}$,
with distribution,

$$
q_{\ell+1,i}(u_{0:\ell},u_{\ell+1,i}) =
N\left (\tfrac12(u_{0:\ell,i} + u_{0:\ell,i+1}), \tfrac{h_\ell}{4}\right),
$$
for $i=1,\dots, d_\ell$. Operator $L_\ell: v_{0:\ell} \mapsto u_{0:\ell}$ in \eqref{eq:ortho} is given by
the Cholesky factorization: $(L_\ell)_{j,i} = \sqrt{h_\ell}$, $i\leq
j$; $(L_\ell)_{j,i} = 0$ otherwise.

As described above, the path $p$ is a continuous 
function of the driving Brownian path $u$.  Likewise, the path
$p_{0:\ell}$ arising from the Euler-Maruyama discretization of \eqref{eq:diff}
using the Brownian motion positions $u_{0:\ell}$
is a continuous function of $u_{0:\ell}$.
The likelihood function at level $\ell$ will now be
$\cL_{\ell}(u_{0:\ell}) =  \exp\left( -\tfrac12 |y-\cG_\ell(u_{0:\ell})|_\Xi^2 \,\right)$,
with $\cG_\ell(u_{0:\ell})$ denoting the mapping from the Euler scheme points $u_{0:\ell}$
to the position of $p_{0:\ell}$ at observation times. We immediately have
that $|\cG_\ell(u_{0:\ell})| \le C(R)$, so assumption (A\ref{hyp:A}) is satisfied.
We also have
\begin{align*}
\left| \frac{\cL_{\ell}(u_{0:\ell})}{\cL_{\ell-1}(u_{0:\ell-1})} -1\right| &\le  C(R) \big|\cG_\ell(u_{0:\ell}) -\cG_{\ell-1}(u_{0:\ell-1})\big| \\  &\le C'(R) \sup_{t\in [0,T]}|p_\ell(t) - p_{\ell-1}(t)|  = o(h_{\ell-1}^{\beta/2})
\end{align*}
%
%
%
%
%
where $p_\ell(t) = p_{\ell,i}$ for $t\in [(i-1)h_\ell, i h_\ell)$,
the latter bound holding almost surely for any $\beta \in (0,1)$, as shown in Theorem 7.12 of \cite{graham2013stochastic}.
Note that this does not provide our required uniformity in $u_{0:\ell}$ for Assumption
(A\ref{hyp:C}); however, the required rate will be verified numerically.

The specific settings of our numerical study are as follows: $\sigma(p) =
1$, $T = q = 16$, and the observations are evenly spaced with $t_1 = 1$ and
noise $\Xi = 0.01$. The simulations are carried out with $d_0 = 32$ at the
initial level and $d_l = d_02^l$ as described above. 

Numerical results for solution of the conditioned diffusion problem are
shown in Figure \ref{fig:cond_diff}. The variance rate plot helps us to obtain
$\beta$ for our simulations. 
We then consider SMC (i.e.~no telescoping identity),
MLSMC with the standard pCN method for the mutations and MLSMC with the DILI
proposals of Section \ref{sec:mlsmcdili}. 
The samples for the simulations are chosen as mentioned above. The results are repeated 100 times and averaged for robustness.
The (theoretical) cost vs error plot of Figure \ref{fig:cond_diff}
presents a comparison of the three methods. Both MLSMC methods out perform SMC as
was the case in \cite{ourmlsmc}. Moreover, it is evident that the performance with the DILI mutations is superior
to that of the standard pCN mutations.


\begin{figure}[h]
\centering
\subfigure[Variance convergence rate.]
{\includegraphics[width=0.7\columnwidth,height=7cm]{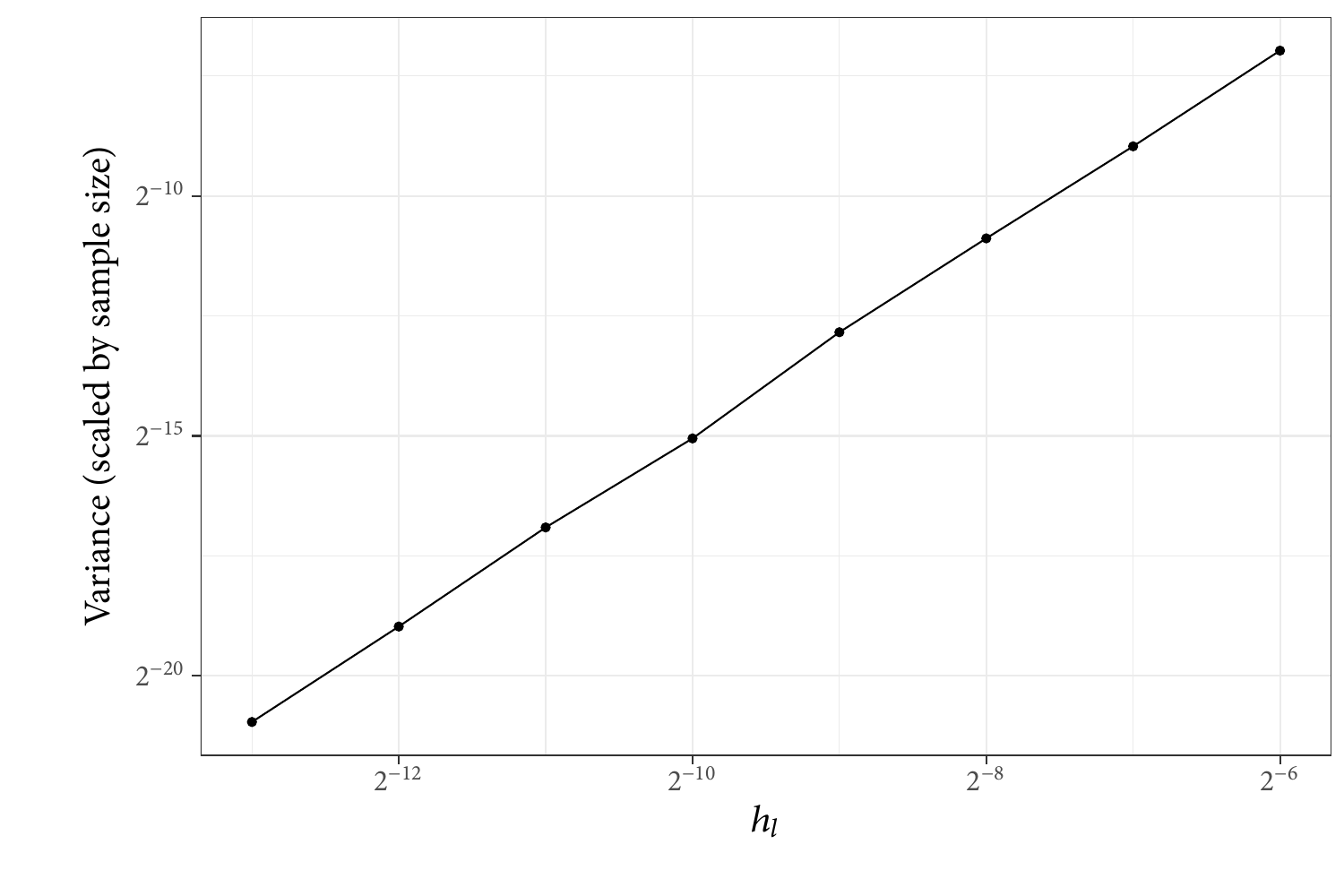}}
\subfigure[Cost vs. error.]
{\includegraphics[width=0.7\columnwidth,height=7cm]{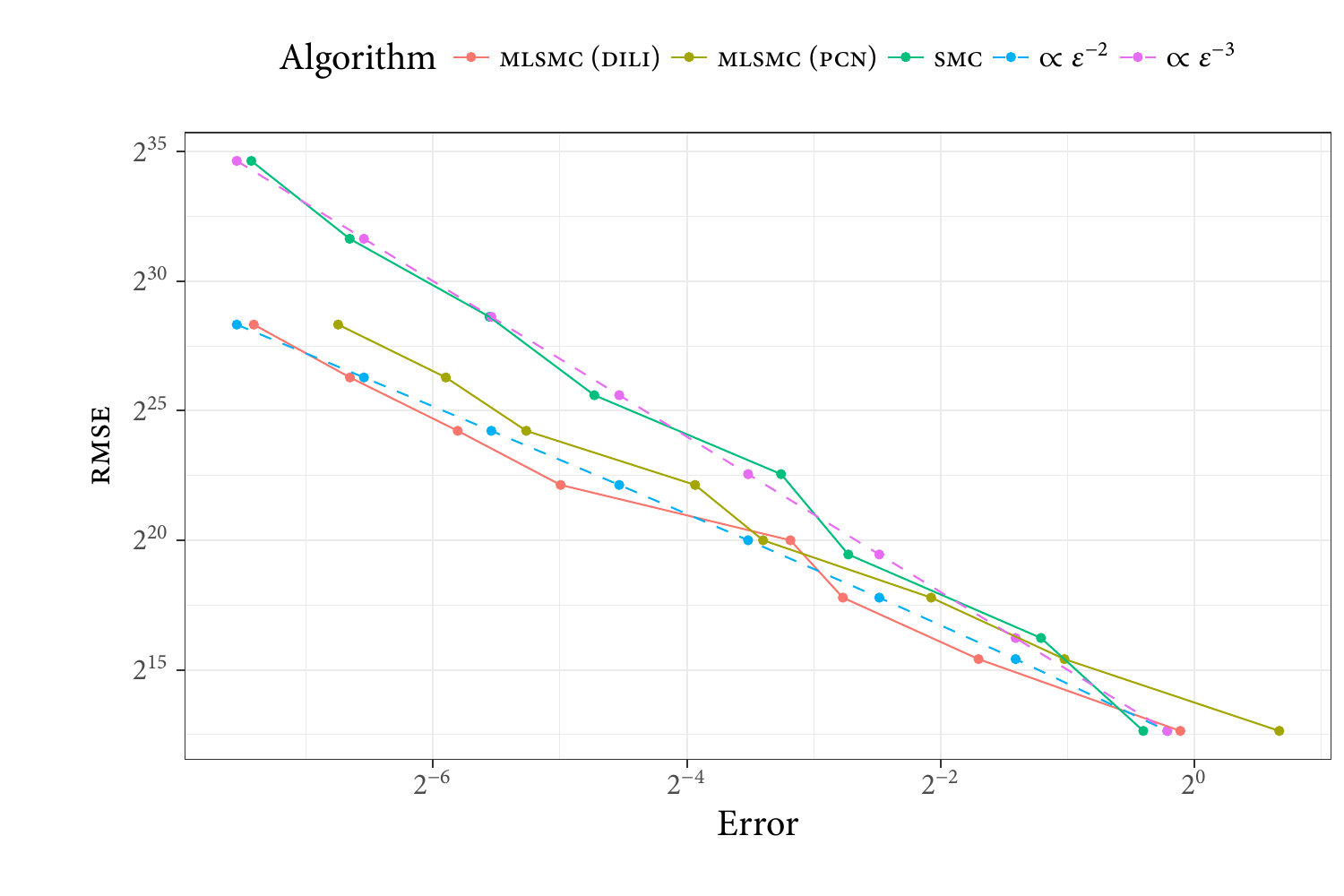}}
\caption{Results for the conditioned diffusion example.}
\label{fig:cond_diff}
\end{figure}


\subsection{Elliptic PDE Inverse Problem}
\label{ssec:pde}

In this section, we consider a Bayesian inverse problem involving
inference of the (log) permeability coefficient in a 2D elliptic PDE, given
noisy measurements of the associated solution field (representing,
e.g., pressure).
Consider the nested spaces
$V := H^1(\Omega) \subset L^2(\Omega) \subset H^{-1}(\Omega) =:V^{*}$,
for a domain
$\Omega \subset \bbR^D$ with convex boundary
$\partial \Omega \in C^0$.
Let $f \in V^{*}$, and consider the following PDE on $\Omega$
\begin{align}
  -\nabla \cdot (\bbK(u) \nabla p) &= f, \qquad\text{on }\Omega,
  \label{eq:elliptic} \\
  p &= 0, \qquad \text{on }  \partial \Omega.
  \label{eq:boundary}
\end{align}
for pressure field $p$, permeability
$\bbK(u) = e^u$, 
and known force vector field $f$.
We set up a Bayesian inference problem for the unknown log permeability field $u$.
We assume a truncated stationary Gaussian prior,
\begin{equation}
\label{eq:prior}
  u \sim \mu_0(du)\cdot \mathrm{I}\,[\,|u|_{\infty}<R\,]  , \quad
  \mu_0\equiv N(0,C),
\end{equation}
for some $R>0$, with $C$ denoting the covariance operator derived through
the covariance function
\begin{equation*}
c(x,x') = \sigma^2 \exp \{- |x-x'|^2/\alpha \},
\end{equation*}
for hyperparameters $\sigma>0$, $\alpha>0$

We will henceforth assume that $D=2$ and $\Omega=[0,1]^2$.
Let $p=p(\,\cdot\,;u)$ denote the weak solution of
\eqref{eq:elliptic}--\eqref{eq:boundary} for parameter $u$. Define the following
vector-valued function
\begin{equation*}
  \mathcal{G}(p) = [g_1(p),\dots,g_M(p)]^\top,
\end{equation*}
where $g_m$ are elements of the dual space $V^*$ for $m = 1,\dots,M$,
for some $M>1$. It is
assumed that the data take the form
\begin{equation}
  y = \mathcal{G}(p) + \nu, \qquad \nu \sim N(0,\Xi), \qquad \nu \perp u,
\end{equation}
so that the likelihood is given again by $\cL(u) = \exp( -\frac12 |y-\cG(p(\,\cdot\,;u))|_\Xi^2 )$.

Under the above choice of covariance function, $u$ is a.s.\@ continuous
on $\Omega$.
When $u\in L^\infty(\Omega)$, it is shown in
\cite{dashti2011uncertainty} that if  $\partial\Omega \in C^1$
then there exists a unique weak solution $p$ depending continuously on $u$,
and whose regularity is  determined by $f$. In particular, for given
$u\in L^\infty(\Omega)$,  if $f\in V^*$ then $p\in V$ while if
$f\in L^s(\Omega)$ for $s>1$ then $p\in L^\infty(\Omega)$.

The specific settings for our simulations and generated data are as
follows: the source/sink term
$f$ is defined by a superposition of four weighted Gaussian bumps with
standard deviation $\sigma_f = 0.05$, centered at $(0.3, 0.3)$, $(0.3,
0.7)$, $(0.7, 0.3)$ and $(0.7, 0.7)$, with weights $\{2, -3, -2,
3\}$, respectively. Observations of the potential function
$p$ are collected at 25 measurement points, evenly spaced within
$[0.2,
0.6]^2$ (boundaries included). The observation variance
$\sigma_y^2$ is chosen such that a prescribed signal-to-noise ratio
(SNR), which is defined as
$\max\{p\}/\sigma_y$, is equal to 10. The hyperparameters
$\alpha$ and
$\sigma^{-2}$ are given Gamma priors with mean and
variance $1$.

\subsubsection{Numerical Method and Multi-Level Approximation}

Consider the 1D piecewise linear nodal 
basis functions $\phi_j^1$ defined as follows,
for mesh $\{x_i = i / K \}_{i=0}^K$
\[
\phi^1_j(x) =
\begin{cases}
\frac{x-x_{j-1}}{x_j-x_{j-1}}, & x\in [x_{j-1},x_j] \\
1- \frac{x-x_{j}}{x_{j+1}-x_{j}}, & x\in [x_{j},x_{j+1}] \\
0, & {\rm else} \ .
\end{cases}
\]
Now consider the tensor product grid over $\Omega=[0,1]^2$
formed by $\{(x_i,x_j)\}_{i,j=1}^K$, where $K=K_0\times2^{\ell}$, 
where the initial resolution $K_0 = 10$.
Let $\phi_{i,j}(x,y) = \phi^1_j(x) \phi^1_i(y)$ be piecewise bilinear functions,
and let $E_\ell = {\rm span}\{ \phi_i \}_{i=1}^{d_\ell}$, with $d_\ell=K^2$,
and any appropriate single index representation.
The permeability at level $\ell$ will be approximated by
$\bbK_{\ell}(u_{0:\ell}) = \sum_{i=1}^{d_\ell} e^{u_{0:\ell}^i} \phi_i$.
Likewise, the solution will be approximated by
$p_\ell(u_{0:\ell}) = \sum_{i=1}^{d_\ell} p_{\ell}^i \phi_i$.
The weak solution of the PDE
(\ref{eq:elliptic})--(\ref{eq:boundary}) is generated by a standard finite element
approximation, resulting in the 
solution ${\bf p} := p_\ell(u_{0:\ell})$.
This is done by substituting these expansions into \eqref{eq:elliptic}
and taking inner product with $\phi_j$ for $j=1,\dots, d_\ell$.
Define ${\bf f}_j = \langle f, \phi_j \rangle$ and
$$
{\bf A}_{ij} := \sum_{k_1=j_1-1}^{j_1+1} \sum_{k_2=j_2-1}^{j_2+1}
\int_{x_{j_1-1}}^{x_{j_1+1}} \int_{y_{j_2-1}}^{y_{j_2+1}}
e^{u_{0:\ell}^k} \phi_k \nabla\phi_i \cdot \nabla \phi_j dx dy ,
$$
where the notation $j=(j_1,j_2)$ is introduced to
represent the components of the indices corresponding to spatial dimensions $1$ and $2$.
The approximate weak solution to equations \eqref{eq:elliptic}, \eqref{eq:boundary}
is given by the system ${\bf A  p} = {\bf f}$.

The solution $p_\ell(u_{0:\ell})$ is then plugged into the likelihood to provide
$\mathcal{L}_\ell(u_{0:\ell})$.
At the next level, values of log-permeability on extra grid points are proposed from the conditional
prior dynamics $u_{\ell+1}|u_{0:\ell}$, by halving horizontal/vertical
distances between points in the grid.

Notice that in the case $R\rightarrow \infty$, 
$\cL_{\ell}(u_{0:\ell})$, is	 not {\em  uniformly bounded}
for the full unrestricted support of the  Gaussian measure $\mu_0$.
Choosing $R<\infty$, the weak form of the equation \eqref{eq:elliptic}
is continuous and coercive uniformly in $u$,
and Lax-Milgram Lemma holds \cite{ciarlet2002finite}.
This provides the uniform bound in (A\ref{hyp:A}).
Uniform bounds on the PDE finite-element approximations
with piecewise bilinear nodal basis functions are readily available
in this case for any fixed space $E_\ell$.
See \cite{ourmlsmc, ciarlet2002finite, zienkiewicz1977finite} for details.


Now, we proceed to extend the proof of convergence rate from finite uniform $u$ \cite{ourmlsmc}
to infinite (truncated) Gaussian $u$.
We define the $V$-norm as
\begin{equation*}
|p|^2_V := 
\int_{[0,1]^2} |\nabla p(u)|^2  dx, \quad p\in V \, ,
\end{equation*}
noting that the boundary condition \eqref{eq:boundary}
guarantees that $\int_{\Omega} p dx = 0$ and so Poincar\'e inequality applies.
As in \cite{ourmlsmc}, the quantity we would like to
bound uniformly in $u$ is
\begin{equation}\label{eq:tri}
|p_\ell(u_{0:\ell}) - p(u)|_V \leq |p_\ell(u_{0:\ell}) - p(u_{0:\ell})|_V
+ |p(u_{0:\ell}) - p(u)|_V.
\end{equation}
The first term is dealt with as in \cite{ourmlsmc}.
The second term comes from the truncation to $E_\ell$.
Denote $\bar p = p(u_{0:\ell})$
and observe that for all $v\in V$
\begin{equation*}
\langle \nabla v , \bbK_\ell(u_{0:\ell}) \nabla \bar p - \bbK(u) \nabla p \rangle = 0,
\end{equation*}
so
\begin{equation*}
\langle \nabla v , \bbK_\ell(u_{0:\ell}) \nabla \bar p - \bbK_\ell(u_{0:\ell}) \nabla p \rangle
+ \langle \nabla v , \bbK_\ell(u_{0:\ell}) \nabla p - \bbK(u) \nabla p \rangle = 0.
\end{equation*}
Letting $v=\bar p - p$ and rearranging, we have (using also Poincar\'{e} inequality)
\begin{align*}
| \bar p - p |_V^2 & \leq  C(|u_{0:\ell}|_{\infty})
| (\bbK_\ell(u_{0:\ell})-\bbK(u)) (\nabla p)\cdot (\nabla (\bar p - p)) | \\
& \leq  C(|u_{0:\ell}|_{\infty})
| \bbK_\ell(u_{0:\ell})-\bbK(u)|_{L^\infty(\Omega)} |  p |_V | \bar p - p |_V
\end{align*}
Therefore on the truncated space $E_R$, the following holds
\begin{equation}\label{eq:desid}
| \bar p - p |_V \leq C(R) | \bbK_\ell(u_{0:\ell})-\bbK(u) |_{L^\infty(\Omega)}
= \cO(h_\ell^{\beta/2}) \, , 
\end{equation}
%
%
for some $\beta\in (2,4)$ (see Section 3.3 of \cite{ciarlet2002finite}).
The error due to the solution of the PDE with
finite element discretization of diameter
$h_{\ell}$ is also given by
\begin{equation*}
|p_\ell(u_{0:\ell}) - p(u_{0:\ell})|_V = \cO(h_\ell^{\beta/2}),
\end{equation*}
for $\beta \in (2,4)$
\cite{ourmlsmc, ciarlet2002finite}. 
Ultimately, the quantity
\begin{equation*}
V_\ell= \max \{\|G_\ell - 1\|^2_\infty, \|\rho_\ell-\rho_{\ell-1}\|^2_\infty\}
\end{equation*}
can be bounded by
$C h_\ell^\beta$,
as both terms are controlled by \eqref{eq:tri}.
The first term is handled similarly to the work \cite{ourmlsmc}.
Typically the functions $\rho_\ell$ 
we are interested in
will have the form $\rho_\ell(u_{0:\ell})_i = f_i(p_\ell(u_{0:\ell}))$ for some $f_i\in V^*$,
Hence $V_\ell = \cO(\||p_\ell(u_{0:\ell}) - p(u)|_V\|_\infty)$.

%



Numerical results for the elliptic PDE inverse problem are presented
in Figure \ref{fig:pde}, which contains plots analogous to those shown
for the previous numerical example. Again, the MLSMC schemes show the
desired improved convergence rate, and the DILI mutation steps yield
consistently better performance than pCN mutations.

%
%



\begin{figure}[h]
\centering
\subfigure[Variance convergence rate.]
{\includegraphics[width=0.7\columnwidth,height=7cm]{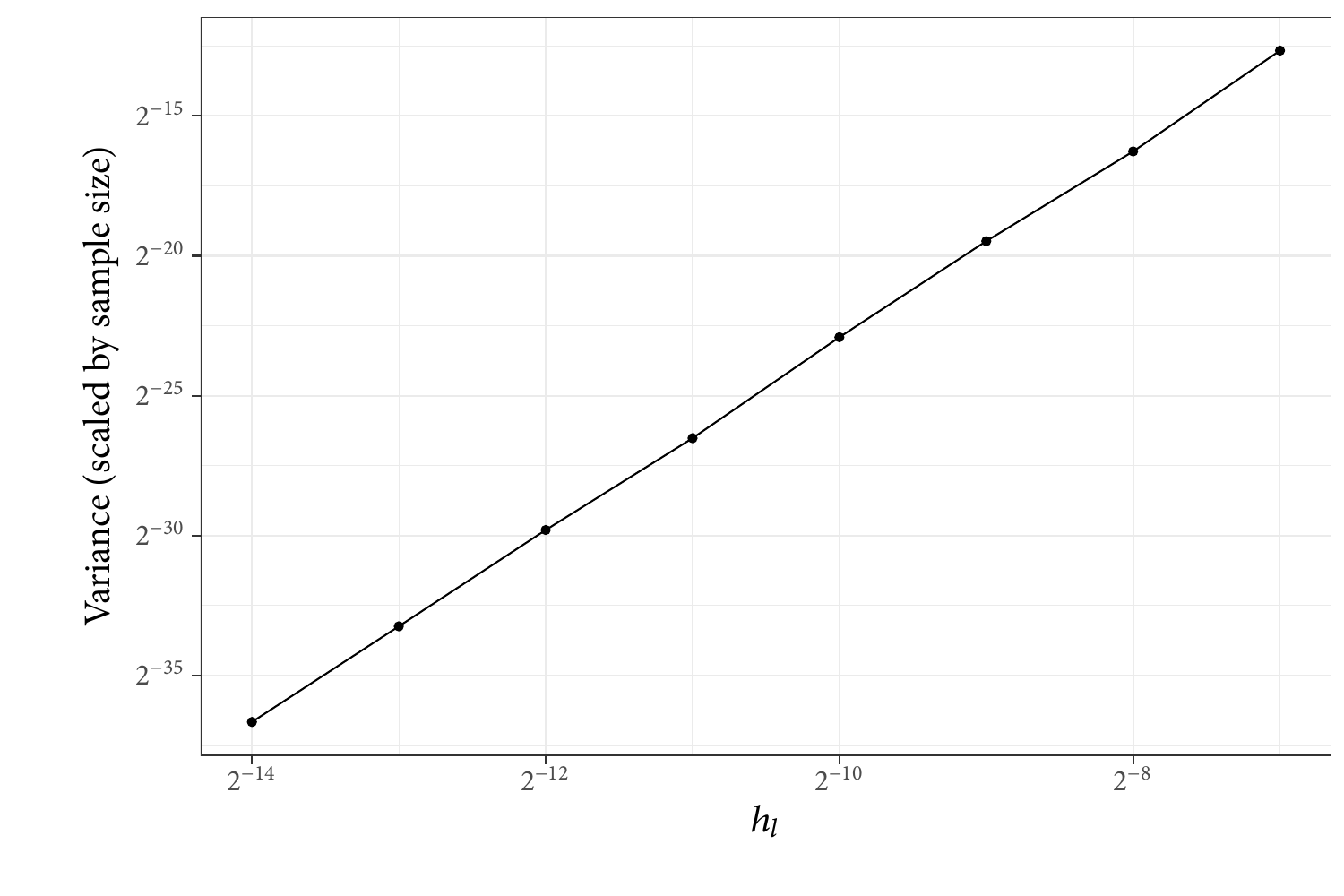}}
\subfigure[Cost vs. error.]
{\includegraphics[width=0.7\columnwidth,height=7cm]{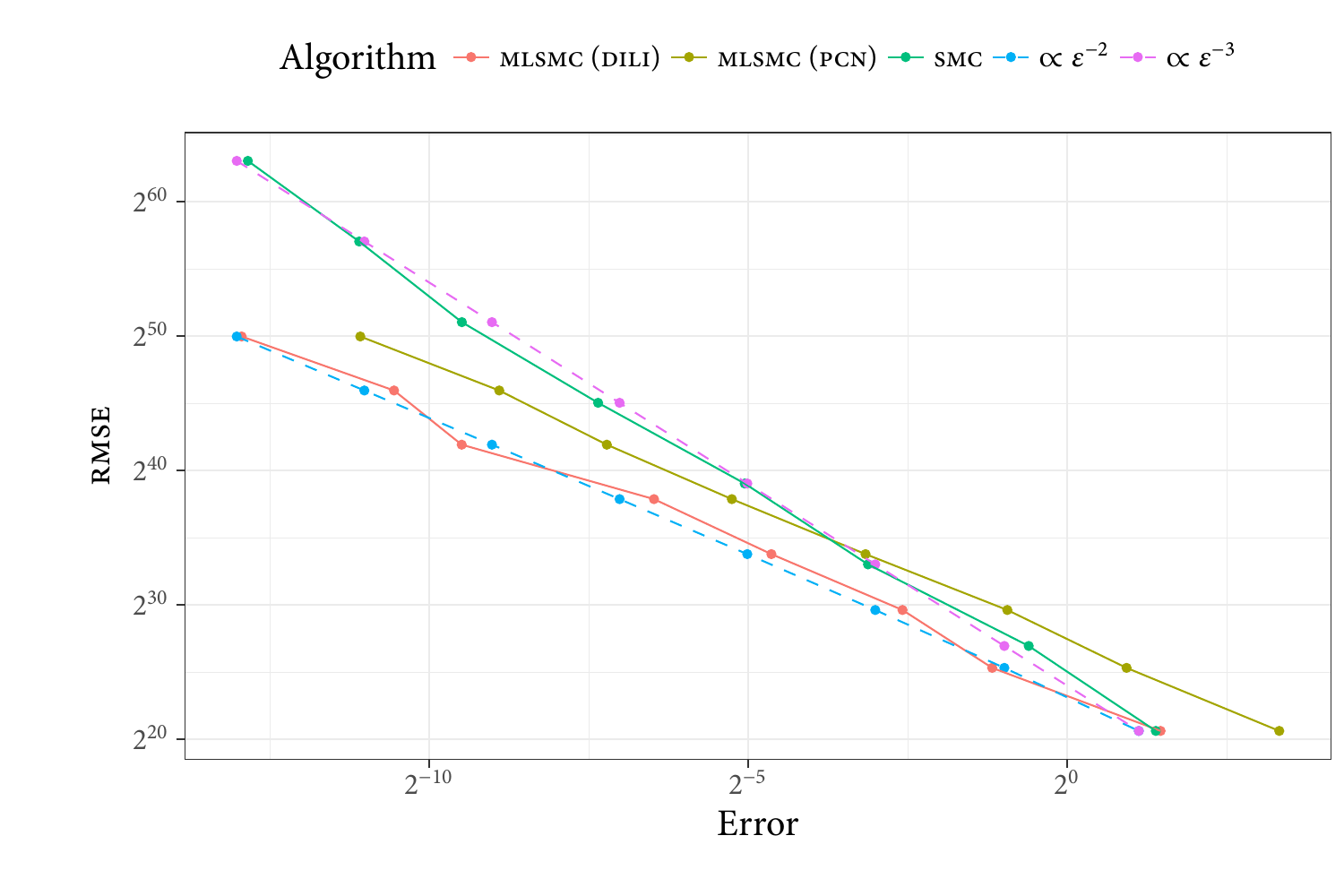}}
\caption{Results for the 2D PDE example.}
\label{fig:pde}
\end{figure}

\subsubsection*{Acknowledgements}
AJ \& YZ were supported by Ministry of Education AcRF tier 2 grant,
R-155-000-161-112.  KJHL was supported by an ORNL LDRD Strategic Hire
grant number 32112580 and also in part by the DOE DiaMonD MMICC. YM was
supported in part by the US Department of Energy (DOE) Office of
Advanced Scientific Computing Research under grant number DE-SC0009297
(DiaMond MMICC).  AB was supported by the Leverhulme Trust Prize.
\appendix

\section{Technical Result}

The following lemma is similar to Theorem 3.1 in \cite{ourmlsmc},
and the proof follows in the same spirit, but is given for completeness.

\begin{lemma}\label{lem:main}
Assume (A\ref{hyp:A}-\ref{hyp:C}).
 Then there exists a $C>0$ and  $\kappa\in (0,1)$ such  that for any
 $g\in\mathcal{B}_b(E)$, with $\|g\|_{\infty}=1$,
\begin{align*}
\mathbb{E}\big[\{\hat \eta^{\rm ML}_L(g)-\mathbb{E}_{\eta_L}[g(U)]\}^2\big]
\leq
C\,\bigg(\frac{1}{N_0} + &\sum_{l=1}^{L}\frac{V_l}{N_{l}} 
+ \sum_{1\le l<q\le L}V_l^{1/2} V_q^{1/2}
\big\{\tfrac{\kappa^{q}}{N_{l}}
+\tfrac{1}{N_{l}^{1/2}N_{q}}
\big\} \bigg)\ .
\end{align*}
\end{lemma}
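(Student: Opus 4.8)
The plan is to follow the single-level-then-telescope strategy of \cite{ourmlsmc}, the only genuinely new point being that the mutation kernels $M_\ell=K_{\ell-1}\otimes q_\ell$ act on spaces of growing dimension; nevertheless (A\ref{hyp:A})--(A\ref{hyp:B}) still yield a uniform-in-$\ell$ multiplicative stability/contraction estimate for the Feynman--Kac flows $\{\Phi_n\}$, and that is all the proof needs. I would first rewrite the error via the telescopic identity as $\hat\eta^{\rm ML}_L(g)-\mathbb{E}_{\eta_L}[g(U)]=\sum_{\ell=0}^{L}D_\ell$, where $D_0:=\eta_0^{N_0}(g)-\eta_0(g)$ and, for $\ell\ge1$,
\[
D_\ell:=\Big(\tfrac{\eta_\ell^{N_\ell}(G_\ell g)}{\eta_\ell^{N_\ell}(G_\ell)}-\eta_\ell^{N_\ell}(g)\Big)-\big(\hat\eta_\ell(g)-\hat\eta_{\ell-1}(g)\big),
\]
with the $\rho_\ell,\rho_{\ell-1}$ notation suppressed. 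Squaring and expanding gives $\mathbb{E}[(\hat\eta^{\rm ML}_L(g)-\mathbb{E}_{\eta_L}[g(U)])^2]=\sum_{\ell}\mathbb{E}[D_\ell^2]+2\sum_{\ell<q}\mathbb{E}[D_\ell D_q]$, so it suffices to control the diagonal and off-diagonal sums. The key algebraic observation, as in \cite{ourmlsmc}, is the identity
\[
\tfrac{\eta_\ell^{N_\ell}(G_\ell g)}{\eta_\ell^{N_\ell}(G_\ell)}-\eta_\ell^{N_\ell}(g)=\eta_\ell^{N_\ell}\!\Big(g\,\tfrac{G_\ell-\eta_\ell^{N_\ell}(G_\ell)}{\eta_\ell^{N_\ell}(G_\ell)}\Big)
\]
(together with the analogous manipulation for the $\rho$-increment): by (A\ref{hyp:A}) one has $\eta_\ell^{N_\ell}(G_\ell)\ge\underline c$, and by (A\ref{hyp:C}) the function inside has sup-norm $\cO(V_\ell^{1/2})$, so each $D_\ell$ is an SMC estimation error applied to a function of oscillation $\cO(V_\ell^{1/2})$.

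For the diagonal terms I would invoke the standard $L^2$ bound for SMC estimators, valid under (A\ref{hyp:A})--(A\ref{hyp:B}) which supply the uniform Dobrushin/Del Moral contraction (cf.~\cite{delm:04}); combined with the $\cO(V_\ell^{1/2})$ oscillation above this upgrades the usual $1/N_\ell$ rate to $\mathbb{E}[D_\ell^2]\lesssim V_\ell/N_\ell$ for $\ell\ge1$, while $\mathbb{E}[D_0^2]\le\|g\|_\infty^2/N_0=1/N_0$. One must also note that each $D_\ell$ carries a ratio-estimator bias of size $\cO(V_\ell^{1/2}/N_\ell)$; the square of $\sum_\ell$ of these biases is, by Cauchy--Schwarz, dominated by $\sum_\ell V_\ell/N_\ell+\sum_{\ell<q}V_\ell^{1/2}V_q^{1/2}N_\ell^{-1/2}N_q^{-1}$, hence already absorbed into the claimed bound.

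For the off-diagonal terms I would expand each $D_\ell$ into its first-order (martingale) decomposition along the resampling/mutation steps $p=0,\dots,\ell$: each summand is a normalized mean-zero fluctuation field at step $p$ acting on the normalized Feynman--Kac semigroup transported from $p$ to $\ell$, plus second-order remainders of size $\cO(V_\ell^{1/2}/N_\ell)$. For $\ell<q$, orthogonality of martingale increments kills every cross-product except those sharing a common step $p\le\ell$; for such a $p$ the semigroup property writes the $q$-side operator as the $p\to\ell$ transport composed with the $\ell\to q$ transport, and the uniform contraction from (A\ref{hyp:A})--(A\ref{hyp:B}) shows the $\ell\to q$ transport of the oscillation-$V_q^{1/2}$ function is a constant plus a term of oscillation $\lesssim\kappa^{q-\ell}V_q^{1/2}$, the constant being annihilated by the mean-zero field. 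Summing $\sum_{p\le\ell}N_p^{-1}$ and absorbing geometric factors produces the contribution $\lesssim V_\ell^{1/2}V_q^{1/2}\kappa^{q}/N_\ell$; pairing the leading $\cO(V_\ell^{1/2}/N_\ell^{1/2})$ part of $D_\ell$ against the $\cO(V_q^{1/2}/N_q)$ second-order remainder of $D_q$ via Cauchy--Schwarz yields the remaining $V_\ell^{1/2}V_q^{1/2}N_\ell^{-1/2}N_q^{-1}$ term. Collecting the diagonal, bias and off-diagonal contributions gives the stated inequality.

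I expect the main obstacle to be the uniform-in-$\ell$ stability of the flows $\Phi_n$ on the growing state spaces $E_\ell$ and the careful bookkeeping of the semigroup contraction across levels in the off-diagonal estimate — this is exactly where (A\ref{hyp:A}) and (A\ref{hyp:B}) are indispensable, and where the growing-dimension structure of $M_\ell=K_{\ell-1}\otimes q_\ell$ must be checked; once this is in hand (as it is, in spirit, in \cite{ourmlsmc}), the remainder is a routine second-moment expansion.
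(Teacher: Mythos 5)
Your proposal is correct and follows essentially the same route as the paper's proof: both reduce the multilevel error to per-level increments, rewrite each increment as an empirical fluctuation applied to functions of oscillation $\cO(V_\ell^{1/2})$ (using (A\ref{hyp:A}) to control the denominator $\eta_\ell^{N_\ell}(G_\ell)\ge\underline{c}$ and to transfer $\|G_\ell-1\|_\infty$ through the normalizing-constant ratio), bound the diagonal terms by $V_\ell/N_\ell$ via the standard $L^2$ particle estimates of \cite{delm:04}, and dispatch the cross terms by the martingale/semigroup-contraction bookkeeping of \cite{ourmlsmc} under (A\ref{hyp:B}). The only cosmetic difference is that you center $G_\ell$ at its empirical mean $\eta_\ell^{N_\ell}(G_\ell)$ whereas the paper centers at the exact $Z_\ell/Z_{\ell-1}=\eta_\ell(G_\ell)$ (yielding its three-term decomposition $T^1_\ell+T^2_\ell+T^3_\ell$, with $T^3_\ell$ the new $\rho_\ell-\rho_{\ell-1}$ term that you likewise identify).
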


\begin{proof}

 The proof follows essentially that of \cite{ourmlsmc} given the above assumptions.
Assumptions (A\ref{hyp:A}-\ref{hyp:B}) are the similar to that paper.
Note that, as shown in Section 4.2 of \cite{ourmlsmc}, there is a constant $C>0$ such that
\begin{equation}\label{eq:gzg}
\|\tfrac{Z_{l-1}}{Z_{l}}G_{l}-1\|_{\infty} \leq C \|G_{l}-1\|_{\infty}.
\end{equation}

Observe that
$\hat\eta_l(\varphi\circ \rho_l) = \eta_l(G_l \varphi \circ \rho_l)/\eta_l(G_l)$.
Now establish the following notations
\begin{align}
Y_{l}^{N_{l}} &= \frac{\eta_{l}^{N_{l}}(G_{l} \varphi \circ \rho_l)}{\eta_{l}^{N_{l}}(G_{l})}
- \eta_{l}^{N_{l}}(\varphi \circ \rho_{l-1})\ , \quad  \nonumber \\[0.2cm]
Y_{l} &= \frac{\eta_{l}(G_{l}\varphi \circ \rho_l)}{\eta_{l}(G_{l})} - \eta_{l}(\varphi\circ \rho_{l-1})
\,\,\,\,\big(\,  \equiv \eta_{l}(g) - \eta_{l-1}(g)\, \big)\ , \label{eq:analytical} \\[0.3cm]
\nonumber
\overline{\varphi_l}(u) & = \big(\tfrac{Z_{l-1}}{Z_l}G_{l}(u)-1\big) \ , \\[0.3cm]
\nonumber
\widetilde{\varphi_l}(u)
&= \overline{\varphi_l}(u) \varphi(u)
\ , \\[0.3cm]
\label{eq:ay}
A_n(\varphi,N) &  = \eta_n^N( G_n \varphi \circ \rho_n)/\eta_n^N(G_n) \ ,
\quad \varphi\in\mathcal{B}_b(E)\ , \quad
0\leq n\leq L-1  \ , \\[0.2cm]
\label{eq:aybar}
\overline{A}_n(\varphi,N) &  =  A_n(\varphi,N) - \frac{\eta_n(G_n \varphi \circ \rho_n)}{\eta_n(G_n)}\ .
\end{align}

Notice that $\eta_l(\overline{\varphi_l}) = 0$ and $\eta_l(G_l)=Z_l/Z_{l-1}$.  So,
\begin{equation}\label{eq:newexpand}
Y_{l}^{N_{l}}-Y_{l} =
\underbrace{A_{l}(\varphi,N_{l})\,\{\eta_{l} - \eta_{l}^{N_{l}}\} (\overline{\varphi_l})}_{T^1_l}
+ \underbrace{\{\eta_{l}^{N_{l}}-\eta_{l}\} (\widetilde{\varphi_l}\circ \rho_l)}_{T^2_l}   +
 \underbrace{\{\eta_{l}^{N_{l}}-\eta_{l}\} (\varphi \circ (\rho_l - \rho_{l-1}))}_{T^3_l}  \ .
\end{equation}
Observe that there is an additional term $T_l^3$
in comparison to Eq. (10) of \cite{ourmlsmc}.
Lemma 3.1 of that paper is replaced by
\begin{equation}\label{eq:newdiagbound}
\|Y_{l}^{N_{l}}-Y_{l}\|_2^2 \leq
4 \|A_{l}(\varphi,N_{l})\,\{\eta_{l} - \eta_{l}^{N_{l}}\} (\overline{\varphi_l}) \|_2^2
+ 4\|\{\eta_{l}^{N_{l}}-\eta_{l}\} (\widetilde{\varphi_l}\circ \rho_l)\|^2_2 +
4\| \{\eta_{l}^{N_{l}}-\eta_{l}\} (\varphi \circ (\rho_l - \rho_{l-1}))\|^2_2 \ .
\end{equation}

In view of \eqref{eq:gzg} and \cite[Theorem 7.4.4]{delm:04},
the first 2 terms are bounded by $C \| G_l - 1 \|^2_\infty / N_l$
and the last term is bounded by $C \| \rho_l - \rho_{l-1} \|^2_\infty / N_l$.
Now
\begin{equation*}
\mathbb{E}\Big[\big\{\sum_{l=1}^{N}(Y_{l}^{N_{l}} - Y_{l})\big\}^2\Big]
= \mathbb{E}\Big[ \sum_{l=1}^{N}(Y_{l}^{N_{l}} - Y_{l})^2\Big] +
2\sum_{1\le l < q\le L}  \mathbb{E}\big[(Y_{l}^{N_{l}} - Y_{l}) (Y_{q}^{N_{q}} - Y_{q})\big],
\end{equation*}
and the cross terms are
\begin{align*}
\sum_{1\le l < q\le L} \mathbb{E} \big[(Y_{l}^{N_{l}} - Y_{l}) (Y_{q}^{N_{q}} - Y_{q})\big]  =   
 & \sum_{1\le l < q\le L}  \mathbb{E}  (T^1_lT_q^1) & (a)  \\
 & +  \sum_{1\le l < q\le L} \mathbb{E} (T^1_lT_q^2) +
\mathbb{E} (T^1_lT_q^3) & (b) \\
& + \sum_{1\le l < q\le L}
\mathbb{E} (T^2_lT_q^1) +  \mathbb{E} (T^3_lT_q^1)  & (c) \\
&  + \sum_{1\le l < q\le L} \mathbb{E} (T^2_lT_q^2) + \mathbb{E} (T^2_lT_q^3)
+  \mathbb{E} (T^3_lT_q^2) + \mathbb{E} (T^3_lT_q^3). & (d)
\end{align*}
There are 5 new terms with respect to \cite{ourmlsmc} (all those including $T^3$),
i.e., 1 in (b), 1 in (c), and 3 in (d), but they can be dealt with similarly.  In fact, since
$\|\tilde{\varphi_n}\|_\infty \leq \|\overline{\varphi_n}\|_\infty \leq C \|G_n - 1\|_\infty$,
and $\max \{\|G_n - 1\|^2_\infty, \|\rho_n-\rho_{n-1}\|^2_\infty\} = V_n$, the terms
are all of the same type as in \cite{ourmlsmc}, grouped by category (a,b,c,d),
and are bounded exactly as in the appendix of that paper.
\end{proof}

\end{document}